\documentclass[runningheads]{llncs}
\usepackage[utf8]{inputenc}
\usepackage[T1]{fontenc}
\usepackage{amsmath,amssymb}
\usepackage{microtype}
\usepackage{booktabs}
\usepackage{tabularx}
\usepackage{placeins}
\makeatletter
\@ifclassloaded{llncs}{}{%
  \usepackage{amsthm}
  \newtheorem{theorem}{Theorem}\newtheorem{lemma}[theorem]{Lemma}
  \newtheorem{proposition}[theorem]{Proposition}\newtheorem{corollary}[theorem]{Corollary}
  \theoremstyle{definition}\newtheorem{definition}[theorem]{Definition}
  \theoremstyle{remark}\newtheorem{remark}[theorem]{Remark}
  \newcommand{\institute}[1]{\gdef\@inst{#1}}\newcommand{\email}[1]{\texttt{#1}}
  \newcommand{\titlerunning}[1]{}\newcommand{\authorrunning}[1]{}
  \newcommand{\keywords}[1]{\par\medskip\noindent\textit{Keywords:} {\def\and{\unskip\ $\cdot$\ }#1}}
}
\makeatother

\newcommand{\F}{\mathbb{F}}
\newcommand{\rk}{\operatorname{rk}}
\newcommand{\Tr}{\operatorname{Tr}}
\newcommand{\Ker}{\operatorname{Ker}}
\newcommand{\Gab}{\mathrm{Gab}}
\newcommand{\GabDec}{\mathsf{GabDec}_1}
\newcommand{\HomMQ}{\mathsf{HomMQ}}
\newcommand{\Piv}{\mathsf{TSM}}

\begin{document}

\title{Maximum-Likelihood Decoding of Gabidulin Codes Is NP-Hard Even in Dimension One}
\titlerunning{ML decoding of Gabidulin codes is NP-hard}
\author{Philippe Gaborit}
\authorrunning{P. Gaborit}
\institute{XLIM, Universit\'e de Limoges, France\\ \email{gaborit@unilim.fr}}
\maketitle

\begin{abstract}
Fix a prime power $q$. We prove that deciding whether a received word $y\in\F_{q^m}^n$ lies at rank distance at most $n-2$ from the one-dimensional Gabidulin code $\{\lambda g:\lambda\in\F_{q^m}\}$ is NP-complete. Thus maximum-likelihood decoding of Gabidulin codes is NP-hard even for $k=1$, an analogue of the Guruswami--Vardy theorem for Reed--Solomon codes. In the Hamming metric, by contrast, hardness at this radius requires the dimension to grow with the length, as in the Guruswami--Vardy theorem. The reduction is deterministic; consequently Rank Syndrome Decoding ($\mathsf{RSD}$) and Rank Minimum Distance ($\mathsf{RMD}$) for $\F_{q^m}$-linear codes are NP-complete under deterministic polynomial reductions, whereas the previously known reduction was randomized. The proof goes through a pivot problem that we introduce, Twisted Square Membership ($\Piv$): given $\F_q$-subspaces $U,W\subset\F_{q^N}$, decide whether some non-zero $a\in U$ satisfies $a^{q+1}\in W$. Quadratic systems in $d$ unknowns are encoded into $\Piv$, then $\Piv$ into Gabidulin decoding. The last step views the received word as an $\F_q$-linear map and decoding at radius $n-2$ as the search for an $\F_q$-plane on which it acts as a scalar. With support $U\oplus\tau W$ in a cubic extension and map $a+\tau b\mapsto a^{q^2}+\tau b^q$, two Frobenius twists force solution planes to encode the relation $a^{q+1}\in W$. All constructions are explicit and polynomial, with $N=O(qd^2)$, $n\le N+d$, and $m=3N$.
\keywords{Rank metric \and Gabidulin codes \and NP-completeness \and maximum-likelihood decoding \and MinRank}
\end{abstract}

\section{Introduction}\label{sec:intro}

\subsection{Context and motivation}

Gabidulin codes \cite{Del78,Gab85} are the rank-metric analogues of Reed--Solomon codes. They are maximum rank distance codes and can be decoded in polynomial time up to $\lfloor(n-k)/2\rfloor$ errors. Beyond half the minimum distance, however, the picture changes sharply: no analogue of the Guruswami--Sudan algorithm is known, and for many parameters no efficient worst-case list decoder can exist because some received words have exponentially many nearby codewords \cite{WZ13,RWZ16}. The complexity of maximum-likelihood decoding for Gabidulin codes themselves has remained open, to the best of our knowledge. Our main result closes this gap in a particularly rigid case: for every fixed prime power $q$, decoding is NP-complete already for dimension-one Gabidulin codes at radius $n-2$ (Theorem~\ref{thm:main}).

This result fits into a longer line of hardness results for decoding. In the Hamming metric, Berlekamp, McEliece and van Tilborg proved that decoding a general linear code is NP-complete \cite{BMvT78}; Vardy settled the corresponding minimum-distance problem \cite{Var97}, and Dumer, Micciancio and Sudan proved hardness of approximation \cite{DMS03}. Structured algebraic codes require separate arguments. Reed--Solomon codes admit polynomial-time decoding up to half the minimum distance, and the Guruswami--Sudan algorithm reaches the Johnson radius \cite{GS99}; nevertheless, Guruswami and Vardy proved maximum-likelihood decoding NP-hard at radius $n-k-1$ over fields of exponential size \cite{GV05}. Further hardness results connect bounded-distance decoding to discrete logarithms \cite{CW07}, cover algebraic-geometry codes under randomized reductions \cite{Che08}, and lower the NP-hard radius for Reed--Solomon codes to $n-k-c\log n/\log\log n$ \cite{GGG15}.

In the rank metric, a vector in $\F_{q^m}^n$ is weighted by the $\F_q$-dimension of the space spanned by its coordinates rather than by the number of non-zero coordinates. Besides their coding-theoretic role, Gabidulin codes underlie error control in random linear network coding \cite{KK08,SKK08} and have found applications in theoretical computer science, from explicit hitting sets for low-rank matrices \cite{FS12} to list decoding via subspace designs \cite{GX13,GK13}; random Gabidulin codes achieve list decoding capacity \cite{GXYZ24}. On the complexity side, MinRank is NP-complete \cite{BFS99}, and decoding an $\F_q$-linear code is an instance of MinRank. The $\F_{q^m}$-linear case is substantially more rigid. Gaborit and Z\'emor proved hardness for such rank-metric decoding problems through a randomized reduction from Hamming decoding \cite{GZ16}, but no hardness result was known for maximum-likelihood decoding of Gabidulin codes themselves, to the best of our knowledge.

The question also matters in rank-based cryptography. The first rank-metric encryption scheme, due to Gabidulin, Paramonov and Tretjakov \cite{GPT91}, is a McEliece-type construction based on a disguised Gabidulin code; Overbeck's structural attacks \cite{Ove08} motivated repaired variants, including Loidreau's scheme \cite{Loi17} and its subsequent cryptanalysis \cite{CC20}. More directly related to this work, the Faure--Loidreau scheme \cite{FL05} and its repaired version LIGA \cite{RPW21} rely on decoding Gabidulin codes beyond the unique-decoding radius. Our result concerns worst-case complexity only and does not address the particular parameters or instance distributions used by these cryptosystems; it places the underlying worst-case decoding problem on a firmer complexity-theoretic footing. Other rank-based constructions rely on rank syndrome decoding for random or quasi-cyclic $\F_{q^m}$-linear codes, including LRPC-based schemes \cite{GMRZ13,ABD+19}, the ROLLO and RQC candidates to the NIST post-quantum standardization process \cite{ROLLO,RQC}, and rank-based signatures such as Durandal \cite{ABG+19}.

\subsection{The idea of the method}

The proof rests on three elementary observations whose combination is the key.

\emph{First, decoding becomes an eigenplane problem.} For a dimension-one Gabidulin code $\{\lambda g\}$, the received word $y$ can be viewed as an $\F_q$-linear map $\tilde y$ on the $\F_q$-space spanned by the coordinates of $g$, while the codeword $\lambda g$ corresponds to multiplication by $\lambda$. Decoding at radius $n-2$ is therefore equivalent to finding an $\F_q$-plane on which $\tilde y$ acts as a scalar, which we call an \emph{eigenplane}.

\emph{Second, the eigenplane equation produces a twisted square.} Because the received word is part of the constructed instance, we let $\tilde y$ act by different Frobenius twists on different parts of the support. Comparing $a\mapsto a^{q^2}$ and $b\mapsto b^q$ along an eigenplane yields $b\in\F_q^*\,a^{q+1}$. Thus the only non-linearity that must be encoded is the map $a\mapsto a^{q+1}=a\cdot a^q$. This leads to the pivot problem Twisted Square Membership ($\Piv$): given subspaces $U,W\subset\F_{q^N}$, decide whether there exists a non-zero $a\in U$ with $a^{q+1}\in W$. The linear membership constraints are built into the support, while the eigenplane equation supplies the non-linear relation.

\emph{Third, twisted squares encode homogeneous quadratic systems.} The element $a\cdot a^q$ is quadratic in the $\F_q$-coordinates of $a$. With a suitable basis of $U$, every homogeneous quadratic system can be represented by linear conditions on $a^{q+1}$. Since such systems encode 3-SAT, $\Piv$ is NP-complete.

The complete reduction is therefore
\[
\textsf{3-SAT}\longrightarrow\HomMQ\longrightarrow\Piv\longrightarrow\GabDec,
\]
with $\Piv$ separating the finite-field encoding of quadratic forms from the Gabidulin decoding construction.

\subsection{Contributions and organization}

Theorem~\ref{thm:main} proves that, for every fixed prime power $q$, deciding whether a received word lies at rank distance at most $n-2$ from a dimension-one Gabidulin code is NP-complete. Radius $n-1$ is always reachable, so $n-2$ is the first nontrivial radius. Maximum-likelihood decoding is therefore NP-hard as soon as the problem ceases to be trivial, already for $k=1$. This is a rank-metric analogue of the Guruswami--Vardy theorem, with the notable difference that its Hamming-metric counterpart requires growing dimension; see Section~\ref{sec:cons}.

The reduction is deterministic. As a consequence, rank syndrome decoding and rank minimum distance for general $\F_{q^m}$-linear codes are NP-complete under deterministic polynomial reductions, with rank minimum distance already hard for dimension-two codes (Corollaries~\ref{cor:sd} and~\ref{cor:md}). The previously known reduction for these problems was randomized \cite{GZ16}.

A further contribution is the pivot problem $\Piv$, a compact NP-complete problem over finite fields that may be of independent interest. The construction is fully explicit: the encoding exponents are given by a closed formula obtained by interleaving base-$q$ expansions, and a quadratic system in $d$ unknowns leads to parameters $N=O(qd^2)$, $n\le N+d$, and $m=3N$.

\emph{Organization.} Section~\ref{sec:prelim} recalls the necessary background and develops the eigenplane formulation and $\Piv$. Section~\ref{sec:overview} states the main theorem and gives the complete reduction at a high level. Sections~\ref{sec:sat}--\ref{sec:gab} prove its three steps. Section~\ref{sec:cons} derives the complexity consequences, and Section~\ref{sec:disc} discusses open questions and suggests some directions.

\section{Background and preliminaries}\label{sec:prelim}

This section gathers what the reduction needs. The first part recalls standard material: the rank metric, Gabidulin codes, the two decoding problems of rank-metric coding theory, and the two classical NP-complete problems, 3-SAT and homogeneous quadratic systems, from which we start. The second part contains the two ingredients specific to this paper: the reformulation of decoding in dimension one as the search for an eigenplane, and the pivot problem $\Piv$.

\subsection{Background: known problems}

Throughout, $q$ is a fixed prime power. For $v=(v_1,\dots,v_n)\in\F_{q^m}^n$, the rank $\rk(v)$ is the dimension over $\F_q$ of $\langle v_1,\dots,v_n\rangle_{\F_q}$; equivalently it is the rank of the $m\times n$ matrix over $\F_q$ whose $j$-th column is the vector of coordinates of $v_j$ in a fixed basis of $\F_{q^m}/\F_q$. The rank distance is $d(u,v)=\rk(u-v)$. For instance, with $\alpha\in\F_{q^m}\setminus\F_q$, the vector $(\alpha,\alpha,\alpha+1)$ has Hamming weight $3$ but rank $2$, its coordinates spanning $\langle 1,\alpha\rangle$: vanishing coordinates in the Hamming metric are replaced by $\F_q$-linear dependencies in the rank metric.

An $\F_{q^m}$-linear code of length $n\le m$ and dimension $k$ has minimum rank distance at most $n-k+1$; this is the Singleton bound for the rank metric \cite{Del78,Gab85}, and codes meeting it are called maximum rank distance (MRD) codes. Let $g\in\F_{q^m}^n$ with $\rk(g)=n$. The Gabidulin code of dimension $k$ is $\Gab(g,k)=\{(f(g_1),\dots,f(g_n))\}$ where $f$ runs over the $q$-polynomials $\sum_{i<k}f_iX^{q^i}$ with $f_i\in\F_{q^m}$; it is MRD, with minimum rank distance $n-k+1$. For $k=1$ one has $f(X)=\lambda X$, so $\Gab(g,1)=\{\lambda g:\lambda\in\F_{q^m}\}$ has minimum distance $n$ and unique decoding radius $\lfloor (n-1)/2\rfloor$. At the other end, decoding $\Gab(g,1)$ at radius $n-1$ is easy: every received word lies at rank distance at most $n-1$ from the code, since for any coordinate $y_j$ the scalar $\lambda=y_j/g_j$ makes the $j$-th coordinate of $y-\lambda g$ vanish, so that its coordinates span a space of dimension at most $n-1$. The first radius at which decoding is a non-trivial question is therefore $n-2$.

We shall decode this code, but our conclusions concern the two general problems of rank-metric coding theory, which we recall now. An $\F_{q^m}$-linear code of length $n$ and dimension $k$ is given by a generator matrix $G\in\F_{q^m}^{k\times n}$ or by a parity-check matrix $H\in\F_{q^m}^{(n-k)\times n}$. Both problems below are in NP, with $e$ or the codeword as certificate.

\begin{definition}[rank syndrome decoding, $\mathsf{RSD}$]
Given $H\in\F_{q^m}^{(n-k)\times n}$, a syndrome $s\in\F_{q^m}^{n-k}$ and an integer $t$, decide whether there exists $e\in\F_{q^m}^n$ with $He^\top=s$ and $\rk(e)\le t$.
\end{definition}

\begin{definition}[rank minimum distance, $\mathsf{RMD}$]
Given $G\in\F_{q^m}^{k\times n}$ and an integer $w$, decide whether the code generated by $G$ contains a non-zero codeword of rank at most $w$.
\end{definition}

$\mathsf{RSD}$ is equivalent to bounded-distance decoding, with input $(G,y,t)$ and question whether some codeword lies at rank distance at most $t$ from $y$, since $e=y-c$ and $H$, $s=Hy^\top$ are computed from $G$ and $y$ in polynomial time.

\begin{remark}[relation with MinRank]
Identifying a vector of $\F_{q^m}^n$ with its $m\times n$ matrix of coordinates over $\F_q$, the codewords of an $\F_{q^m}$-linear code form an $\F_q$-space of matrices, and $\mathsf{RSD}$ asks for a matrix of rank at most $t$ in an affine space of matrices: this is an instance of MinRank. It is a structured one, since the matrix space is stable under multiplication by $\F_{q^m}$; for the code $\{\lambda g\}$, for instance, every non-zero matrix of the space has full rank $n$. This gives $\mathsf{RSD}\le$ MinRank and not the converse, so the hardness of MinRank \cite{BFS99} does not transfer and a dedicated reduction is needed.
\end{remark}

The source of our reduction is the satisfiability problem for formulas in 3-conjunctive normal form.

\begin{definition}[3-SAT]
An instance consists of Boolean variables $b_1,\dots,b_v$ and a formula $\Phi=C_1\wedge\dots\wedge C_c$. Each clause $C_j$ is a disjunction $\ell_{j,1}\vee\ell_{j,2}\vee\ell_{j,3}$ of three \emph{literals}, a literal being either a variable $b_i$ or its negation $\neg b_i$. An \emph{assignment} is a map $\sigma:\{b_1,\dots,b_v\}\to\{\text{true},\text{false}\}$; it makes the literal $b_i$ true if $\sigma(b_i)=\text{true}$, the literal $\neg b_i$ true if $\sigma(b_i)=\text{false}$, a clause true if at least one of its three literals is true, and $\Phi$ true if all clauses are true. The question is whether there exists an assignment making $\Phi$ true, in which case $\Phi$ is \emph{satisfiable}.
\end{definition}

For instance $\Phi=(b_1\vee\neg b_2\vee b_3)\wedge(\neg b_1\vee b_2\vee b_3)$ is satisfiable, by $\sigma(b_3)=\text{true}$, whatever the values of $b_1$ and $b_2$. 3-SAT is the NP-complete problem of Cook and Levin \cite{GJ79}; the size of an instance is polynomial in $v+c$. Its first translation is into quadratic systems, whose NP-completeness is classical as well \cite{FY79,GJ79}; we use the homogeneous version.

\begin{definition}[$\HomMQ$]
Given homogeneous quadratic forms $Q_1,\dots,Q_s$ in $d$ unknowns over $\F_q$, decide whether there exists $x=(x_1,\dots,x_d)\in\F_q^d$, $x\neq0$, with $Q_1(x)=\dots=Q_s(x)=0$.
\end{definition}

\subsection{Preliminaries: the eigenplane formulation and the TSM pivot problem}

We next specialize to the problem studied in this paper: decoding dimension-one Gabidulin codes at the first nontrivial radius.

\begin{definition}[$\GabDec$]
Given $g,y\in\F_{q^m}^n$ with $\rk(g)=n$, decide whether there exists $\lambda\in\F_{q^m}$ such that $\rk(y-\lambda g)\le n-2$.
\end{definition}

This is $\mathsf{RSD}$ for $k=1$, $G=g$ and $t=n-2$; it is in NP with $\lambda$ as certificate. Radius $n-1$ is always reachable, as shown below, so $n-2$ is the first non-trivial radius.

The key to everything that follows is to read vectors as linear maps. Throughout the paper, $g=(g_1,\dots,g_n)\in\F_{q^m}^n$ has $\F_q$-linearly independent coordinates, and we call
\[
G=\langle g_1,\dots,g_n\rangle_{\F_q}\subset\F_{q^m}
\]
the \emph{support} of the code $\Gab(g,1)$; it is an $\F_q$-space of dimension $n$, of which $g$ is a basis.

\begin{definition}[the map $\tilde y$]\label{def:ytilde}
For $y=(y_1,\dots,y_n)\in\F_{q^m}^n$, we denote by $\tilde y:G\to\F_{q^m}$ the unique $\F_q$-linear map such that $\tilde y(g_j)=y_j$ for $j=1,\dots,n$, that is,
\[
\tilde y\Bigl(\sum_{j=1}^n c_jg_j\Bigr)=\sum_{j=1}^n c_jy_j\qquad(c_j\in\F_q).
\]
\end{definition}

\begin{lemma}[vectors as linear maps]\label{lem:ytilde}
The map $y\mapsto\tilde y$ is a bijection between $\F_{q^m}^n$ and the set of $\F_q$-linear maps $G\to\F_{q^m}$, whose inverse is $\tilde y\mapsto(\tilde y(g_1),\dots,\tilde y(g_n))$. Under this bijection, the codeword $\lambda g$ corresponds to the multiplication by $\lambda$ on $G$, and $y-\lambda g$ corresponds to $\tilde y-\lambda$.
\end{lemma}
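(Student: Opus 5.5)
The plan is to reduce everything to the fact that $g_1,\dots,g_n$ form an $\F_q$-basis of $G$. This holds because the coordinates of $g$ are $\F_q$-linearly independent and $G$ is, by definition, their span. The lemma then follows from the universal property of a basis together with a check that the correspondence respects the relevant operations.

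First, well-definedness. Every $x\in G$ has a unique expression $x=\sum_j c_jg_j$ with $c_j\in\F_q$. Hence the formula of Definition~\ref{def:ytilde} defines a function $\tilde y:G\to\F_{q^m}$. Comparing coefficients shows that this function is $\F_q$-additive and $\F_q$-homogeneous. By construction $\tilde y(g_j)=y_j$.

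Second, bijectivity. Let $\Psi$ send an $\F_q$-linear map $\varphi:G\to\F_{q^m}$ to $(\varphi(g_1),\dots,\varphi(g_n))$.
\begin{itemize}
\item The composite $\Psi(\tilde y)=y$ holds directly from $\tilde y(g_j)=y_j$.
\item For any $\varphi$, the maps $\widetilde{\Psi(\varphi)}$ and $\varphi$ are $\F_q$-linear and agree on the basis $g$. Hence they coincide, since an $\F_q$-linear map is determined by its values on a basis.
\end{itemize}
So $y\mapsto\tilde y$ is a bijection with inverse $\Psi$.

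Third, compatibility with the code. For $\lambda\in\F_{q^m}$, multiplication $m_\lambda:x\mapsto\lambda x$ restricted to $G$ is $\F_q$-linear, because $\F_q\subset\F_{q^m}$ and multiplication commutes with $\F_q$-scalars. It sends $g_j$ to $\lambda g_j$. Therefore $\Psi(m_\lambda)=\lambda g$, that is, $\widetilde{\lambda g}=m_\lambda$.

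Finally, the map $y\mapsto\tilde y$ is $\F_q$-linear, and in fact additive: the formula in Definition~\ref{def:ytilde} is additive in $y$. Hence
\[
\widetilde{y-\lambda g}=\tilde y-\widetilde{\lambda g}=\tilde y-\lambda,
\]
where $\lambda$ denotes multiplication by $\lambda$ on $G$.

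There is no real obstacle here. The only point requiring care is to note explicitly that $m_\lambda$ is $\F_q$-linear, though not $\F_{q^m}$-linear in general as a map out of $G$, since $G$ is not an $\F_{q^m}$-space. This ensures it lies in the target set of the bijection.
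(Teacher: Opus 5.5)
Your proof is correct and follows the same route as the paper: both rest on $g$ being an $\F_q$-basis of $G$, so that an $\F_q$-linear map is determined by arbitrary images of $g_1,\dots,g_n$; both identify $\widetilde{\lambda g}$ with multiplication by $\lambda$ by checking the two maps on the basis, and both obtain the last claim from the linearity (additivity suffices) of $y\mapsto\tilde y$. Your version only spells out what the paper leaves implicit, namely the two compositions and the $\F_q$-linearity of $x\mapsto\lambda x$ on $G$.
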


\begin{proof}
An $\F_q$-linear map on $G$ is determined by the images of the basis $g$, and these images can be prescribed arbitrarily. The map attached to $\lambda g$ sends $g_j$ to $\lambda g_j$, hence is $x\mapsto\lambda x$ by linearity; the last assertion follows since $y\mapsto\tilde y$ is $\F_q$-linear.
\end{proof}

In particular, choosing a received word is the same as choosing an $\F_q$-linear map on the support; this is how received words will be constructed in Section~\ref{sec:gab}.

\begin{lemma}\label{lem:rank}
$\rk(y-\lambda g)=n-\dim_{\F_q}\Ker(\tilde y-\lambda)$.
\end{lemma}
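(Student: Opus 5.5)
The plan is to reduce the statement to rank–nullity for the $\F_q$-linear map $\tilde y-\lambda\colon G\to\F_{q^m}$. By Lemma~\ref{lem:ytilde}, the vector $y-\lambda g$ corresponds to the map $\tilde y-\lambda$. So its coordinates are $(\tilde y-\lambda)(g_1),\dots,(\tilde y-\lambda)(g_n)$.

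The first step identifies $\rk(y-\lambda g)$ with the dimension of an image. By definition, $\rk(y-\lambda g)$ is the $\F_q$-dimension of the span of these $n$ coordinates. The $g_j$ form an $\F_q$-basis of $G$, and $\tilde y-\lambda$ is $\F_q$-linear. Hence that span is exactly the image $(\tilde y-\lambda)(G)\subset\F_{q^m}$, and
\[
\rk(y-\lambda g)=\dim_{\F_q}\operatorname{Im}(\tilde y-\lambda).
\]

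The second step applies the rank–nullity theorem over $\F_q$ to $\tilde y-\lambda\colon G\to\F_{q^m}$. Since $\dim_{\F_q}G=n$, this gives $\dim\operatorname{Im}+\dim\Ker=n$, which is the claimed formula.

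There is no real obstacle. The only points to check are that $\tilde y-\lambda$ is $\F_q$-linear (multiplication by $\lambda\in\F_{q^m}$ is $\F_q$-linear) and that the span of the images of a basis equals the image of the whole space. Both are immediate.
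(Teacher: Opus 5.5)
Your proof is correct and is essentially the paper's argument. The coordinates of $y-\lambda g$ are the images $(\tilde y-\lambda)(g_j)$ of a basis of $G$, so they span $\mathrm{Im}(\tilde y-\lambda)$, and rank--nullity over $\F_q$ gives the formula.
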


\begin{proof}
The coordinates of $y-\lambda g$ are the images $(\tilde y-\lambda)(g_j)$ of a basis of $G$, so they span $\mathrm{Im}(\tilde y-\lambda)$; conclude with the rank theorem.
\end{proof}

\begin{lemma}[eigenplane]\label{lem:plane}
There exists $\lambda$ with $\rk(y-\lambda g)\le n-2$ if and only if there exist $\F_q$-linearly independent $x_1,x_2\in G$ with $\tilde y(x_1)\,x_2=\tilde y(x_2)\,x_1$.
\end{lemma}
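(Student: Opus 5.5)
By Lemma~\ref{lem:rank}, the condition $\rk(y-\lambda g)\le n-2$ is equivalent to $\dim_{\F_q}\Ker(\tilde y-\lambda)\ge 2$. So the plan is to show that an $\F_q$-plane inside the kernel of some $\tilde y-\lambda$ exists if and only if there are independent $x_1,x_2\in G$ with $\tilde y(x_1)x_2=\tilde y(x_2)x_1$. Both directions are short manipulations in the field $\F_{q^m}$. The only point requiring care is that $\lambda$ ranges over $\F_{q^m}$, not $\F_q$: the kernel is nevertheless an $\F_q$-subspace of $G$, because $\tilde y-\lambda$ is $\F_q$-linear.

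For the forward direction, suppose $\dim\Ker(\tilde y-\lambda)\ge2$. Pick $\F_q$-independent $x_1,x_2$ in this kernel. Then $\tilde y(x_i)=\lambda x_i$ for $i=1,2$, and hence
\[
\tilde y(x_1)\,x_2=\lambda x_1x_2=\tilde y(x_2)\,x_1 .
\]

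For the converse, let $x_1,x_2$ be $\F_q$-independent with $\tilde y(x_1)x_2=\tilde y(x_2)x_1$. Both are non-zero, so we may divide in $\F_{q^m}$ and set $\lambda=\tilde y(x_1)/x_1$. The hypothesis then gives $\tilde y(x_2)=\tilde y(x_1)x_2/x_1=\lambda x_2$. Thus $x_1$ and $x_2$ both lie in $\Ker(\tilde y-\lambda)$. This kernel is an $\F_q$-subspace, so it contains the plane $\langle x_1,x_2\rangle_{\F_q}$ and has dimension at least $2$. Lemma~\ref{lem:rank} then gives $\rk(y-\lambda g)\le n-2$.

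There is no real obstacle here. The one thing to get right is that $\F_q$-independence of $x_1,x_2$ is exactly what makes the kernel dimension at least $2$. Note that $x_1,x_2$ are $\F_{q^m}$-dependent, since both lie in $\F_{q^m}$ itself. This is harmless, because $\lambda$ is an arbitrary element of $\F_{q^m}$ while the kernel dimension is measured over $\F_q$.
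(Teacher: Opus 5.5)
Your proof is correct and is essentially the paper's argument. You use Lemma~\ref{lem:rank} to reduce to $\dim\Ker(\tilde y-\lambda)\ge2$, take two independent kernel vectors for the forward direction, and for the converse set $\lambda=\tilde y(x_1)/x_1$ and use the $\F_q$-linearity of $\tilde y-\lambda$ to put the whole plane $\langle x_1,x_2\rangle$ in the kernel.
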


\begin{proof}
If $\dim\Ker(\tilde y-\lambda)\ge 2$, two independent vectors $x_1,x_2$ of this kernel satisfy $\tilde y(x_i)=\lambda x_i$, whence $\tilde y(x_1)x_2=\lambda x_1x_2=\tilde y(x_2)x_1$. Conversely put $\lambda=\tilde y(x_1)/x_1$; the equation gives $\tilde y(x_2)=\lambda x_2$, and since $\tilde y-\lambda$ is $\F_q$-linear the whole plane $\langle x_1,x_2\rangle$ lies in its kernel.
\end{proof}

For a single vector the condition is empty: any $x\ne0$ gives $\lambda=\tilde y(x)/x$, which is why radius $n-1$ is trivial. In Lemma~\ref{lem:plane} the equation is written for one basis $(x_1,x_2)$ of the plane; in Section~\ref{sec:gab} we shall apply it to other pairs of vectors of the same plane, which the following lemma allows.

\begin{lemma}[eigenplane equation for all pairs]\label{lem:allpairs}
Let $L\subset G$ be an $\F_q$-plane. The following are equivalent: (i) $\tilde y(x_1)x_2=\tilde y(x_2)x_1$ for one basis $(x_1,x_2)$ of $L$; (ii) $L\subset\Ker(\tilde y-\lambda)$ for some $\lambda\in\F_{q^m}$; (iii) $\tilde y(x)\,x'=\tilde y(x')\,x$ for all $x,x'\in L$.
\end{lemma}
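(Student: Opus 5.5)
I will prove the cycle of implications (i)$\Rightarrow$(ii)$\Rightarrow$(iii)$\Rightarrow$(i), following the argument of Lemma~\ref{lem:plane}.

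For (i)$\Rightarrow$(ii), let $(x_1,x_2)$ be the given basis of $L$. Since $x_1\neq0$, put $\lambda=\tilde y(x_1)/x_1$. Then $(\tilde y-\lambda)(x_1)=0$ by construction. Dividing the relation $\tilde y(x_1)x_2=\tilde y(x_2)x_1$ by $x_1$ gives $\tilde y(x_2)=\lambda x_2$, so $x_2$ also lies in $\Ker(\tilde y-\lambda)$. This kernel is an $\F_q$-subspace of $G$, because $\tilde y$ is $\F_q$-linear and multiplication by $\lambda$ is $\F_q$-linear. It therefore contains $\langle x_1,x_2\rangle_{\F_q}=L$.

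For (ii)$\Rightarrow$(iii), take any $x,x'\in L$. Then $\tilde y(x)=\lambda x$ and $\tilde y(x')=\lambda x'$, hence $\tilde y(x)x'=\lambda xx'=\tilde y(x')x$ by commutativity of $\F_{q^m}$. No independence is needed here, so the identity holds even for zero or proportional pairs.

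For (iii)$\Rightarrow$(i), note that (iii) applies to all pairs of $L$, in particular to a basis of $L$, which exists since $L$ is a plane. The only point needing care is in the first step: one must check that $\lambda$ is well defined, which uses $x_1\neq0$, and that the kernel is closed under $\F_q$-combinations. Neither is a real obstacle; the rest is immediate.
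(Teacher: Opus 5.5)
Your proposal is correct and follows essentially the same route as the paper: the cycle (i)$\Rightarrow$(ii)$\Rightarrow$(iii)$\Rightarrow$(i). You take $\lambda=\tilde y(x_1)/x_1$ and close $L$ under the $\F_q$-subspace $\Ker(\tilde y-\lambda)$, then compute $\tilde y(x)x'=\lambda xx'=\tilde y(x')x$, and read (i) off (iii) as a special case.
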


\begin{proof}
(i)$\Rightarrow$(ii) is the second half of the proof of Lemma~\ref{lem:plane}, with $\lambda=\tilde y(x_1)/x_1$. (ii)$\Rightarrow$(iii): every $x\in L$ satisfies $\tilde y(x)=\lambda x$, hence for $x,x'\in L$, $\tilde y(x)x'=\lambda xx'=\tilde y(x')x$. (iii)$\Rightarrow$(i) is a special case.
\end{proof}

We call a plane satisfying these conditions a \emph{solution plane}; by Lemma~\ref{lem:rank}, the decoding instance is a yes-instance if and only if a solution plane exists.

The eigenplane formulation tells us what has to be encoded: a plane on which $\tilde y$ acts as a scalar. The reduction will be organized around an intermediate problem tailored to this, which we introduce now. Let $E=\F_{q^N}$. For $a\in E$ we call $a^{q+1}=a\cdot a^q$ the \emph{twisted square} of $a$: it is a square twisted by the Frobenius, and for $a\in\F_{q^2}$ it would be the norm to $\F_q$, but here $N$ will be odd.

\begin{definition}[Twisted Square Membership, $\Piv$]
Given an integer $N\equiv\pm1\pmod 6$ and two $\F_q$-subspaces $U,W$ of $E=\F_{q^N}$ by bases, decide whether there exists $a\in U\setminus\{0\}$ whose twisted square $a^{q+1}$ lies in $W$.
\end{definition}

The problem is in NP, with $a$ as certificate. Since $a\mapsto a^q$ is $\F_q$-linear, the twisted square is a quadratic form in the coordinates of $a$, which is what allows $\Piv$ to encode $\HomMQ$ (Section~\ref{sec:mq}); and it is the non-linear operation computed by an eigenplane equation, which is what allows $\Piv$ to embed into $\GabDec$ (Section~\ref{sec:gab}). The condition on $N$ combines two constraints of the latter step: $N$ odd, so that $\F_{q^2}\cap\F_{q^N}=\F_q$, and $N$ prime to $3$, so that a cubic extension of $\F_q$ stays cubic over $\F_{q^N}$. We are not aware of $\Piv$ having been studied before.

\section{Main result and overview of the reduction}\label{sec:overview}

We can now state the main result. It concerns the simplest Gabidulin codes, those of dimension one, at the first radius beyond which decoding is not trivial, $n-2$; in particular it does not depend on any special structure of the received word, which is part of the input. The field size $q$ is fixed, while the length $n$ and the extension degree $m$ are part of the input and are polynomial in the size of the source formula.

\begin{theorem}\label{thm:main}
For every fixed prime power $q$, $\GabDec$ is NP-complete. More precisely, a 3-SAT formula with $v$ variables and $c$ clauses is mapped in deterministic polynomial time to an instance with $m=3N$ and $n\le N+d$, where $d=1+v+c$, $N\equiv\pm1\pmod 6$ and $N=O(q\,d^2)$.
\end{theorem}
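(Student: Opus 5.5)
We prove the theorem by composing the three reductions $\textsf{3-SAT}\to\HomMQ\to\Piv\to\GabDec$, each deterministic and polynomial, and tracking the parameters through the chain. Membership in NP was already noted after the definition of $\GabDec$: the scalar $\lambda$ is the certificate, and $\rk(y-\lambda g)$ is computed by Gaussian elimination over $\F_q$. Hardness then follows from the chain.

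\emph{Step 1: 3-SAT to $\HomMQ$, with $d=1+v+c$.} We use a homogenizing unknown $x_0$, one unknown $x_i$ per Boolean variable, and one slack unknown $z_j$ per clause.
\begin{itemize}
\item The equations $x_i^2=x_0x_i$ force $x_i\in\{0,x_0\}$. If $x_0=0$, these equations give $x_i=0$ for all $i$.
\item For each clause, a negated literal $\neg b_i$ is written as $x_0-x_i$. The clause is encoded by a homogeneous quadratic identity involving $z_j$ and the three literal expressions. This identity is chosen so that, once every literal lies in $\{0,x_0\}$, it is solvable exactly when some literal equals $x_0$.
\item The slack unknowns are also pinned down by equations of the form $z_j^2=x_0z_j$ (or similar), so that every unknown vanishes when $x_0=0$.
\end{itemize}
Hence any non-zero solution has $x_0\ne0$. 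Rescaling to $x_0=1$ then gives a satisfying assignment, and conversely any satisfying assignment yields such a solution. The precise clause gadget over a general $\F_q$ is routine but must be checked.

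\emph{Step 2: $\HomMQ$ to $\Piv$.} This is the step I expect to be the main obstacle. We take $U=\langle u_1,\dots,u_d\rangle$ with $u_i=\theta^{e_i}$, where $\theta$ generates $E=\F_{q^N}$. Since $x_i\in\F_q$, we get
\[
a^{q+1}=\sum_i x_i^2u_i^{q+1}+\sum_{i<j}x_ix_j\,(u_iu_j^q+u_ju_i^q).
\]
Here $u_iu_j^q=\theta^{e_i+qe_j}$. If the exponents $e_i+qe_j$ (over ordered pairs) are pairwise distinct and all smaller than $N$, the $d(d+1)/2$ elements $u_i^{q+1}$ and $u_iu_j^q+u_ju_i^q$ are $\F_q$-linearly independent. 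Consequently:
\begin{itemize}
\item the map $x\mapsto a$ is injective, so $a\ne0$ if and only if $x\ne0$;
\item each $Q_\ell$ becomes a linear functional $\phi_\ell$ on the span $S$ of these elements.
\end{itemize}
We let $W$ be the common kernel of the $\phi_\ell$ inside $S$. Then $a^{q+1}\in W$ if and only if $Q_\ell(x)=0$ for all $\ell$, since $a^{q+1}$ always lies in $S$. To meet the distinctness condition, we interleave the base-$q$ digits of the exponents $e_i$, so that $e_i+qe_j$ determines $(i,j)$ and stays below some $N=O(qd^2)$. We then pick $N\equiv\pm1\pmod6$ in a constant-length window above this bound. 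Writing the bases of $U$ and $W$ requires only arithmetic in $E$, which is polynomial.

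\emph{Step 3: $\Piv$ to $\GabDec$.} We take $m=3N$ and write $\F_{q^m}=E(\tau)$, where $\tau$ generates $\F_{q^3}$. This is a cubic extension of $E$ because $\gcd(N,3)=1$, so $1,\tau,\tau^2$ is an $E$-basis. The support is $G=U\oplus\tau W$, a direct sum, so $n=\dim U+\dim W\le d+N$, and $g$ is any basis of $G$. The received word is defined, via Lemma~\ref{lem:ytilde}, by $\tilde y(a+\tau b)=a^{q^2}+\tau b^q$.

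For $x=a+\tau b$ and $x'=a'+\tau b'$ in $G$, we expand the equation $\tilde y(x)x'=\tilde y(x')x$ in the basis $1,\tau,\tau^2$. This gives three equations over $E$:
\[
a^{q^2}a'=a'^{q^2}a,\qquad b^qb'=b'^qb,\qquad a^{q^2}b'+b^qa'=a'^{q^2}b+b'^qa.
\]

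\emph{Soundness.} Let $L$ be a solution plane; by Lemma~\ref{lem:allpairs} the three equations hold for all pairs in $L$.
\begin{itemize}
\item The first equation gives $(a'/a)^{q^2-1}=1$, so $a'/a\in\F_{q^2}\cap E=\F_q$, using that $N$ is odd.
\item The second gives $b'/b\in\F_q$.
\item So the projection of $L$ to $U$ has dimension at most one, and likewise its projection to $W$. Since $\dim L=2$, both projections are non-zero.
\item Hence $L$ contains some $\tau b$ with $b\neq0$ and some $a'\ne0$ in $U$.
\item Applying the third equation to the pair $x=\tau b$, $x'=a'$ gives $b^qa'=a'^{q^2}b$. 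Thus $b^{q-1}=a'^{q^2-1}=(a'^{q+1})^{q-1}$, so $b\in\F_q^*\,a'^{q+1}$.
\item Therefore $a'^{q+1}\in W$ with $a'\in U\setminus\{0\}$.
\end{itemize}

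\emph{Completeness.} Given such an $a$, the plane $\langle a,\tau a^{q+1}\rangle\subset G$ satisfies the equation for this basis, so it is a solution plane by Lemma~\ref{lem:allpairs}.

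Combining the three steps with Lemma~\ref{lem:rank} and Lemma~\ref{lem:plane} yields the theorem with the stated parameters $m=3N$, $n\le N+d$ and $N=O(qd^2)$.
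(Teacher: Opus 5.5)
Your proposal follows the paper's proof closely. The chain is the same, $\textsf{3-SAT}\to\HomMQ\to\Piv\to\GabDec$, with the same interleaved exponents, the same field $K=E(\tau)$, support $U\oplus\tau W$ and received word $a^{q^2}+\tau b^q$. The soundness argument is also the same: the components on $1$ and $\tau^2$ bound both projections of $L$, and the component on $\tau$ applied to $(\tau b,a')$ gives $b\in\F_q^*a'^{q+1}$.

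The one real difference is in Step 2. You take $W$ to be the common kernel, inside $S=\langle\mu_{ij}\rangle$, of the functionals $\phi_\ell(\mu_{ij})=c^{(\ell)}_{ij}$. The paper instead extends these functionals to all of $E$ by trace duality and sets $W=\{z:\Tr(t_kz)=0\}$. Your version is correct, since $a^{q+1}\in S$ always, and it dispenses with Lemma~\ref{lem:trace}. It also gives $\dim W\le d(d+1)/2$, so $n=O(d^2)$ independently of $q$. The paper's $W$ has codimension at most $s$ in $E$, so its $n$ is close to $N$ and $m\approx 3n$, the regime quoted in its discussion. With your $W$, the ratio $m/n$ is of order $q$. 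Both choices satisfy $n\le N+d$.

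Two points are left unfinished, both easy to close.

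\emph{The clause gadget.} You never write it down, and the count $d=1+v+c$ requires one with a single slack unknown per clause. The paper's gadget does this. Let $u_k$ be the linear form vanishing on a true literal, namely $u=x_0-x_i$ for $b_i$ and $u=x_i$ for $\neg b_i$. Then impose $x_0z_j=u_1u_2$, $z_ju_3=0$ and $z_j^2=x_0z_j$. With $x_0=1$ these say $z_j=u_1u_2$ and $u_1u_2u_3=0$, and with $x_0=0$ they force $z_j=0$.

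\emph{Determinism.} The theorem claims a \emph{deterministic} reduction, yet you take a generator $\theta$ of $\F_{q^N}$ for granted. Producing an irreducible polynomial of degree $N$ over $\F_q$ is normally done by randomized search. The paper invokes Shoup's deterministic algorithm here, which runs in time polynomial in $N$ for fixed $q$. Without such an ingredient your reduction is only randomized, which is precisely the limitation of the earlier result that the paper claims to remove.
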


Membership in NP was noted in Section~\ref{sec:prelim}. NP-hardness is proved by a reduction from 3-SAT in three steps,
\[
\textsf{3-SAT}\ \xrightarrow{\ \S\ref{sec:sat}\ }\ \HomMQ\ \xrightarrow{\ \S\ref{sec:mq}\ }\ \Piv\ \xrightarrow{\ \S\ref{sec:gab}\ }\ \GabDec ,
\]
each step being an equivalence between a yes-instance of one problem and a yes-instance of the next (Propositions~\ref{prop:sat}, \ref{prop:mq} and~\ref{prop:gab}). Before giving the details, we describe the three steps once in a row, following the same object along the chain: a satisfying assignment of the formula becomes a non-zero solution $x$ of a quadratic system, then a field element $a$, and finally a plane $\langle a,\tau a^{q+1}\rangle$ on which the received word acts as a scalar.

\paragraph{From $\textsf{3-SAT}$ to $\HomMQ$.} Given a formula $\Phi$ with variables $b_1,\dots,b_v$ and clauses $C$, we write a system of quadratic equations in $d=1+v+c$ unknowns over $\F_q$: an unknown $h$ which stands for the constant $1$, one unknown $z_i$ per variable, which stands for its truth value, and one auxiliary unknown $w_C$ per clause. The equations say that $z_i\in\{0,1\}$ and that at least one literal of each clause is true. The unknown $h$ is there to make all equations homogeneous: a satisfying assignment gives a solution with $h=1$, and conversely homogeneity and the equations $z_i^2=hz_i$ force any non-zero solution to have $h\ne0$, hence to come from a satisfying assignment. This yields homogeneous quadratic forms $Q_1,\dots,Q_s$ such that $\Phi$ is satisfiable if and only if the $Q_k$ have a common zero $x\ne0$.

\paragraph{From $\HomMQ$ to $\Piv$.} We encode the $d$ coordinates of $x$ into a single field element so that all the quadratic forms become linear conditions on its twisted square. We take $E=\F_{q^N}$ with a generator $\alpha$, choose exponents $e_1<\dots<e_d$, and encode $x$ as $a=\phi(x)=x_1\alpha^{e_1}+\dots+x_d\alpha^{e_d}$; this identifies $\F_q^d$ with the subspace $U=\langle\alpha^{e_1},\dots,\alpha^{e_d}\rangle$, and $x\ne0$ with $a\ne0$. Expanding $a^{q+1}=a\cdot a^q$ gives $\sum_{i\le j}x_ix_j\mu_{ij}$ with $\mu_{ij}\in E$; the exponents are chosen so that the $\mu_{ij}$ are linearly independent, which means that $a^{q+1}$ "remembers" every monomial $x_ix_j$ separately. Trace duality then provides, for each $k$, an element $t_k\in E$ with $\Tr(t_ka^{q+1})=Q_k(x)$ for all $x$. Putting $W=\{z\in E:\Tr(t_kz)=0\text{ for all }k\}$, the conditions $Q_k(x)=0$ become $a^{q+1}\in W$. This gives an instance $(U,W)$ of $\Piv$ that is a yes-instance if and only if the system has a non-zero solution.

\paragraph{From $\Piv$ to $\GabDec$.} From $(U,W)$ we build a code and a received word, i.e.\ a support $G$ and an $\F_q$-linear map $\tilde y$ on it, so that the eigenplanes of $\tilde y$ are exactly the planes $\langle a,\tau a^{q+1}\rangle$ with $a$ a solution of $\Piv$. Three things must be imposed on a plane: its shape, and the two conditions of $\Piv$. The linear conditions are placed in the support: $G=U\oplus\tau W$ inside a cubic extension $K=E(\tau)$, so that every vector of $G$ is $a+\tau b$ with $a\in U$, $b\in W$. The eigenplane equation, one equality in $K$, is then three equalities in $E$, one per coefficient on $1,\tau,\tau^2$, and this is what the cubic extension is for. We take $\tilde y(a+\tau b)=a^{q^2}+\tau b^q$, two Frobenius twists with different exponents. Two of the three equalities force the plane to be $\langle a,\tau b\rangle$; the third compares the two twists and forces $b$ to be the twisted square $a^{q+1}$, up to a scalar of $\F_q$, so that $a^{q+1}\in W$. Conversely a solution $a$ of $\Piv$ gives the eigenplane $\langle a,\tau a^{q+1}\rangle$. Thus $(g,y)$ is a yes-instance exactly when $(U,W)$ is; the details are in Section~\ref{sec:gab}.

The sizes are summarized in Table~\ref{tab:sizes} at the end of Section~\ref{sec:gab}. The three steps are now treated in turn, each in a self-contained section.

\section{From $\textsf{3-SAT}$ to $\HomMQ$}\label{sec:sat}

This first step is classical, and we include it because its two features, homogeneity and the special unknown $h$, are used later. We start from a 3-SAT instance as in Section~\ref{sec:prelim}, with variables $b_1,\dots,b_v$ and $c$ clauses $C=\ell_1\vee\ell_2\vee\ell_3$. We may assume $v+c\ge1$, so that $d=1+v+c\ge2$; the trivial formula with no variable and no clause is satisfiable and is mapped to a fixed yes-instance. From such an instance we construct an instance of $\HomMQ$: homogeneous quadratic forms $Q_1,\dots,Q_s$ in $d$ unknowns over $\F_q$ such that the formula is satisfiable if and only if the forms have a common zero $x\ne0$.

The translation is the natural one. A truth value becomes an element of $\{0,1\}\subset\F_q$, encoded by the equation $z^2=z$. A literal is represented by a linear expression which vanishes exactly when the literal is true, and a clause is true when the product of the three expressions of its literals vanishes; since this product has degree three, an auxiliary unknown is introduced per clause to split it into equations of degree two. Finally, the constant $1$ is replaced by an unknown $h$, for the reason explained below.

\paragraph{The role of the unknown $h$ for homogenization.} We recall here the standard technique of homogenization \cite{CLO15}, and explain why it is needed. The unknown $h$ is imposed by the pivot problem $\Piv$ (Twisted Square Membership) to which the quadratic system will be reduced in Section~\ref{sec:mq}. The twisted square $a^{q+1}=a\cdot a^q$ is a homogeneous quadratic form in the coordinates of $a$: every monomial is a product $x_ix_j$, with no constant and no linear term. Through trace duality, only homogeneous equations of degree two can therefore be encoded into $\Piv$. The natural equations are not homogeneous: a Boolean value $z$ satisfies $z^2-z=0$, which has a linear term, and a clause involves the constant $1$. We therefore homogenize: the constant $1$ is replaced by a new unknown $h$ and every linear term $z$ by $hz$, which turns an equation $f(z)=0$ of degree at most two into the homogeneous equation $h^2f(z/h)=0$; geometrically, this passes from an affine variety to its projective closure \cite[Ch.~8]{CLO15}. Its effect on solutions is the following. If $h\ne0$, every equation can be divided by $h^2$, and the solution divided by $h$ is a solution of the original system with $h=1$. If $h=0$, the equations $z_i^2-hz_i=0$ become $z_i^2=0$, so $z_i=0$, and likewise for the auxiliary unknowns, so that the solution is zero. Thus the homogenization creates no spurious solution, provided one asks for a \emph{non-zero} solution, which is exactly what $\HomMQ$ does.

For instance, consider the single clause $b_1\vee\neg b_2\vee b_3$. Without $h$, it is encoded by the equations $z_i^2-z_i=0$ and $(1-z_1)z_2(1-z_3)=0$, which cannot be fed to $\Piv$. With $h$, the equations become $z_i^2-hz_i=0$, $w^2-hw=0$, $hw-(h-z_1)z_2=0$ and $w(h-z_3)=0$, all homogeneous of degree two. The assignment $b_1=\text{true}$, $b_2=b_3=\text{false}$ gives the solution $(h,z_1,z_2,z_3,w)=(1,1,0,0,0)$. Over $\F_3$, the vector $(2,2,0,0,0)=2\cdot(1,1,0,0,0)$ is also a solution; it describes the same assignment, recovered by dividing by $h=2$. Setting $h=0$ forces $z_1=z_2=z_3=w=0$.

Concretely, the unknowns are $h$, one unknown $z_i$ per variable $b_i$, and one auxiliary unknown $w_C$ per clause $C$, so that $d=1+v+c$. To a literal we associate the linear form
\[
u(b_i)=h-z_i,\qquad u(\neg b_i)=z_i ,
\]
which, for $h=1$, vanishes exactly when the literal is true. For a clause $C=\ell_1\vee\ell_2\vee\ell_3$ we write $u_j=u(\ell_j)$. The system consists of the following equations, $v$ of the first kind and $3c$ of the other three, so that $s=v+3c$:
\[
z_i^2-hz_i=0,\qquad w_C^2-hw_C=0,\qquad hw_C-u_1u_2=0,\qquad w_Cu_3=0 .
\]
The first family says that $z_i$ is a truth value, the third defines $w_C$ as $u_1u_2$, and the last one then says $u_1u_2u_3=0$, i.e.\ that one literal of $C$ is true; the second family only normalizes $w_C$.

Before proving that this system does what we want, let us see it on a small case. As a running example, take the formula consisting of the single clause $C=b_1\vee\neg b_2\vee b_3$, so that $v=3$, $c=1$ and $d=5$. The unknowns are $h,z_1,z_2,z_3,w_C$, the literal forms are $u_1=h-z_1$, $u_2=z_2$, $u_3=h-z_3$, and the system consists of the $s=v+3c=6$ equations
\[
z_1^2-hz_1,\quad z_2^2-hz_2,\quad z_3^2-hz_3,\quad w_C^2-hw_C,\quad hw_C-(h-z_1)z_2,\quad w_C(h-z_3).
\]
The assignment $b_1=\text{true}$, $b_2=b_3=\text{false}$ gives the solution $(h,z_1,z_2,z_3,w_C)=(1,1,0,0,0)$, and indeed $u_1=0$. In the notation of Section~\ref{sec:mq}, where the unknowns are numbered $x_1=h$, $x_2=z_1$, $x_3=z_2$, $x_4=z_3$, $x_5=w_C$, the fifth equation reads $x_1x_5-x_1x_3+x_2x_3$, with coefficients $c_{15}=1$, $c_{13}=-1$, $c_{23}=1$ and all others zero.

It remains to verify that satisfying assignments correspond exactly to non-zero solutions. The only point requiring care is that a non-zero solution must have $h\ne0$.

\begin{proposition}\label{prop:sat}
The formula is satisfiable if and only if the system has a non-zero solution in $\F_q^d$.
\end{proposition}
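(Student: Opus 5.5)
The plan is to prove the two directions separately. The only real content is in the converse, where the homogenization has to be undone.

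For the forward direction, I start from a satisfying assignment $\sigma$ and set $h=1$, $z_i=1$ if $\sigma(b_i)=\text{true}$ and $z_i=0$ otherwise. For each clause I take $w_C=u_1u_2$, evaluated at these values. The resulting vector is non-zero because $h=1$. Then I check the four families.
\begin{itemize}
\item The equation $z_i^2-hz_i=0$ holds since $z_i\in\{0,1\}$.
\item Each $u(\ell)$ lies in $\{0,1\}$, and it is $0$ exactly when $\ell$ is true. Hence $w_C=u_1u_2\in\{0,1\}$, so $w_C^2-hw_C=0$.
\item The equation $hw_C-u_1u_2=0$ holds by definition of $w_C$.
\item Finally $w_Cu_3=u_1u_2u_3$, which is $0$ because some literal of $C$ is true.
\end{itemize}

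For the converse, let $x=(h,z,w)\neq0$ be a solution. I first show that $h\neq0$, which is the step needing care. Suppose $h=0$. The equations $z_i^2-hz_i=0$ then give $z_i^2=0$, so $z_i=0$ because $\F_q$ is a field. Likewise $w_C^2-hw_C=0$ gives $w_C=0$. Hence $x=0$, a contradiction.

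So $h\neq0$, and by homogeneity $h^{-1}x$ is again a solution; we may therefore assume $h=1$. The equations $z_i^2=z_i$ now force $z_i\in\{0,1\}$, and I define $\sigma(b_i)=\text{true}$ iff $z_i=1$. With this choice, each literal form $u(\ell)$ vanishes exactly when $\ell$ is true under $\sigma$. For a clause $C$, the equation $hw_C=u_1u_2$ gives $w_C=u_1u_2$, and then $w_Cu_3=0$ gives $u_1u_2u_3=0$. Since $\F_q$ has no zero divisors, some $u_j=0$, so some literal of $C$ is true and $\sigma$ satisfies $\Phi$.

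Two remarks complete the argument. The normalization equation $w_C^2=hw_C$ is not needed in this direction; its role is only to make $h=0$ force $w_C=0$. The degenerate case $v+c=0$ was already set aside before the construction.
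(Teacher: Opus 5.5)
Your proof is correct and follows the paper's argument exactly. In the forward direction you set $h=1$ and $w_C=u_1u_2$; in the converse you use $z_i^2=hz_i$ and $w_C^2=hw_C$ to rule out $h=0$, rescale to $h=1$, and read off the assignment. Your explicit check of $w_C^2=hw_C$ in the forward direction is a detail the paper leaves implicit, and your remark that this equation serves only to force $w_C=0$ when $h=0$ is accurate.
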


\begin{proof}
Given a satisfying assignment, set $h=1$, $z_i\in\{0,1\}$ its value and $w_C=u_1u_2$; the last equation holds since one literal of $C$ is true, so $u_1u_2u_3=0$. Conversely let $x$ be a non-zero solution. If $h=0$ the first two equations give $z_i=w_C=0$, hence $x=0$; so $h\ne0$, and by homogeneity we may take $h=1$. Then $z_i\in\{0,1\}$ defines an assignment, $w_C=u_1u_2$, and $u_1u_2u_3=0$ says that a literal of each clause is true.
\end{proof}

The output of this step, the list $Q_1,\dots,Q_s$ of the left-hand sides above, read as forms in the $d$ unknowns $h,z_1,\dots,z_v,(w_C)_C$, is the input of Section~\ref{sec:mq}.

\section{From $\HomMQ$ to $\Piv$}\label{sec:mq}

We start from an instance of $\HomMQ$, that is, homogeneous quadratic forms $Q_1,\dots,Q_s$ in $d$ unknowns over $\F_q$, for instance those produced in Section~\ref{sec:sat}, and we construct an instance $(N,U,W)$ of $\Piv$, that is, a field $E=\F_{q^N}$ with $N\equiv\pm1\pmod 6$ and two subspaces $U,W\subset E$ given by bases, such that the forms have a common zero $x\ne0$ if and only if some $a\in U\setminus\{0\}$ has $a^{q+1}\in W$ (Proposition~\ref{prop:mq}). The plan is to encode the unknown vector $x\in\F_q^d$ by a single element $a$ of a field $E=\F_{q^N}$, chosen so that every form $Q_k(x)$ can be read off $a^{q+1}$ by a linear operation. Two tools make this possible. Trace duality (Section~\ref{sec:trace}) expresses every $\F_q$-linear form on $E$ as $z\mapsto\Tr(tz)$ and lets us prescribe the values of such a form on independent elements. An explicit choice of exponents (Section~\ref{sec:expo}) ensures that, when $a=\sum_ix_i\alpha^{e_i}$ is raised to the power $q+1$, each monomial $x_ix_j$ comes with its own independent coefficient in $E$, so that the forms can be prescribed freely. The encoding itself and the proof of the equivalence are in Section~\ref{sec:encoding}.

\subsection{Trace duality}\label{sec:trace}

Let $E=\F_{q^N}$ and $\Tr(z)=z+z^q+\dots+z^{q^{N-1}}$, an $\F_q$-linear map $E\to\F_q$ which is not identically zero, being a polynomial of degree $q^{N-1}<|E|$.

\begin{lemma}\label{lem:trace}
Every $\F_q$-linear form on $E$ is $z\mapsto\Tr(tz)$ for a unique $t\in E$. Consequently, if $\mu_1,\dots,\mu_r\in E$ are $\F_q$-linearly independent and $c_1,\dots,c_r\in\F_q$, there exists $t\in E$ with $\Tr(t\mu_j)=c_j$ for all $j$, computable by linear algebra.
\end{lemma}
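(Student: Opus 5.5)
The plan is to consider the $\F_q$-linear map $\Psi:E\to\operatorname{Hom}_{\F_q}(E,\F_q)$, $t\mapsto(z\mapsto\Tr(tz))$, and to show that it is an isomorphism. Both spaces have $\F_q$-dimension $N$, so it suffices to prove that $\Psi$ is injective. We first check that $\Psi$ is well defined and $\F_q$-linear. The trace is $\F_q$-linear: it is additive because Frobenius is additive, and $c^{q^i}=c$ for $c\in\F_q$. Multiplication by $t$ is $\F_q$-linear, and $\Tr(tz)$ is linear in $t$ as well.

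For injectivity, suppose $t\ne0$ and $\Tr(tz)=0$ for all $z\in E$. As $z$ ranges over $E$, so does $tz$, hence $\Tr$ would vanish identically on $E$. This contradicts the fact recalled just before the lemma: $\Tr$ is a nonzero polynomial of degree $q^{N-1}<q^N$, so it cannot vanish on all $q^N$ elements of $E$. Thus $\Psi$ is injective, hence bijective. This gives existence and uniqueness of $t$ for each linear form.

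For the consequence, we extend the independent family $\mu_1,\dots,\mu_r$ to an $\F_q$-basis $\mu_1,\dots,\mu_N$ of $E$. We then define the linear form $\ell$ by $\ell(\mu_j)=c_j$ for $j\le r$ and $\ell(\mu_j)=0$ otherwise, and take $t=\Psi^{-1}(\ell)$. Computationally, we fix an $\F_q$-basis $\beta_1,\dots,\beta_N$ of $E$ and write $t=\sum_i t_i\beta_i$ with unknowns $t_i\in\F_q$. The conditions become the linear system $\sum_i t_i\Tr(\beta_i\mu_j)=c_j$ for $j=1,\dots,r$. Its coefficients $\Tr(\beta_i\mu_j)$ are computable in polynomial time, by $N$ Frobenius powers or via the trace of the multiplication matrix. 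The system is solvable by the first part, and Gaussian elimination finds a solution.

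There is no real obstacle in this argument. The only point needing care is the non-vanishing of $\Tr$, which the paper has already justified by the degree count.
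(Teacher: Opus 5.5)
Your proof is correct and follows the paper's own argument: injectivity of $t\mapsto\Tr(t\,\cdot)$ from the bijectivity of $z\mapsto tz$ and the non-vanishing of $\Tr$, then a dimension count, then extending the $\mu_j$ to a basis and prescribing the form on it. Your explicit linear system $\sum_i t_i\Tr(\beta_i\mu_j)=c_j$ just spells out the ``computable by linear algebra'' clause that the paper leaves implicit.
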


\begin{proof}
If $t\ne0$, $z\mapsto tz$ is a bijection of $E$, so $z\mapsto\Tr(tz)$ is not zero. Hence $t\mapsto\Tr(t\,\cdot)$ is an injective linear map between two spaces of dimension $N$, thus bijective. For the second claim, complete the $\mu_j$ into a basis and prescribe a linear form on it.
\end{proof}

\subsection{Choice of exponents}\label{sec:expo}

Let us first see what is needed. The unknown $x\in\F_q^d$ will be encoded as $a=\sum_{i=1}^dx_i\alpha^{e_i}$, where $\alpha$ generates $E$ over $\F_q$ and $e_1<\dots<e_d$ are integers to be chosen. Since $x_i^q=x_i$ for $x_i\in\F_q$, the twisted square expands as
\[
a^{q+1}=a\cdot a^q=\Bigl(\sum_ix_i\alpha^{e_i}\Bigr)\Bigl(\sum_jx_j\alpha^{qe_j}\Bigr)=\sum_{i,j}x_ix_j\,\alpha^{e_i+qe_j}.
\]
So the twisted square is the sum, over all ordered pairs $(i,j)$, of the monomial $x_ix_j$ times the power $\alpha^{e_i+qe_j}$. We want to recover each monomial from $a^{q+1}$ by a linear operation, and this is possible as soon as the powers $\alpha^{e_i+qe_j}$ are linearly independent over $\F_q$, which holds when the exponents $e_i+qe_j$ are pairwise distinct and smaller than $N$, since $1,\alpha,\dots,\alpha^{N-1}$ is a basis. The condition on the exponents is thus
\[
(\star)\qquad\text{the } d^2 \text{ integers } e_i+qe_j\ (1\le i,j\le d)\ \text{ are pairwise distinct,}
\]
the case $i=j$ giving $(q+1)e_i$; and the size of the largest of them will dictate $N$, hence the size of the whole construction. We therefore want integers satisfying $(\star)$ which are as small as possible.

The idea of the construction is that $e_i+qe_j$ should display $i$ and $j$ side by side, without interference. Write $i-1$ in base $q$ and dilate its expansion, placing its coefficients on the even positions; the odd positions are then free, and multiplying by $q$ shifts a number by one position. Thus $e_i$ occupies the even positions and $qe_j$ the odd ones, and the sum $e_i+qe_j$ is simply the interleaving of the two base-$q$ expansions, from which $i$ and $j$ are read off. Formally, for an integer $t\ge0$ with base-$q$ expansion $t=\sum_kt_kq^k$, with coefficients $0\le t_k<q$, put
\[
\delta_q(t)=\sum_kt_kq^{2k},
\]
and set $e_i=\delta_q(i-1)$. For $q=2$, $t=5=\mathtt{101}_2$ gives $\delta_2(5)=\mathtt{10001}_2=17$, and the first exponents are $0,1,4,5,16$.

\begin{lemma}\label{lem:expo}
The integers $e_i=\delta_q(i-1)$ are strictly increasing, satisfy $(\star)$, and $e_d\le(d-1)^2$. Hence all integers of $(\star)$ are at most $(q+1)(d-1)^2$.
\end{lemma}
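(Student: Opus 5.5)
Three facts about $\delta_q$ give the proof: it is strictly increasing, $e_i+qe_j$ is a valid base-$q$ expansion whose even and odd digits are those of $i-1$ and $j-1$, and $\delta_q(t)\le t^2$.

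\emph{Monotonicity.} I compare two integers $t<t'$ through their base-$q$ expansions. Let $k$ be the highest position where their digits differ; there $t_k<t'_k$ and higher digits agree. After dilation the images agree above position $2k$ and satisfy $t_kq^{2k}<t'_kq^{2k}$. The lower positions contribute at most $\sum_{l<k}(q-1)q^{2l}<q^{2k}$. Hence $\delta_q(t)<\delta_q(t')$, so the $e_i$ are strictly increasing.

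\emph{Property $(\star)$.} For $i,j$ put $s=i-1$ and $t=j-1$, with digits $s_k,t_k\in\{0,\dots,q-1\}$. Then
\[
e_i+qe_j=\sum_k s_kq^{2k}+\sum_k t_kq^{2k+1}.
\]
This is a sum of digits in $\{0,\dots,q-1\}$ placed at distinct positions, so no carries occur and it is exactly the base-$q$ expansion of $e_i+qe_j$. By uniqueness of base-$q$ expansions, the integer $e_i+qe_j$ determines all the $s_k$ (even digits) and all the $t_k$ (odd digits), hence determines $(i,j)$. So the $d^2$ values are pairwise distinct.

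\emph{The bound $e_d\le(d-1)^2$.} I prove $\delta_q(t)\le t^2$ for all $t\ge0$; the case $t=0$ is trivial.
\begin{itemize}
\item For $t\ge1$, let $K$ be the top nonzero digit position, so $t\ge t_Kq^K$.
\item Then $t^2\ge t_K^2q^{2K}\ge t_Kq^{2K}$.
\item We have $\delta_q(t)=t_Kq^{2K}+\delta_q(t')$, where $t'=t-t_Kq^K<q^K$.
\item Induction gives $\delta_q(t')\le t'^2$.
\item Also $t^2=(t_Kq^K+t')^2\ge t_K^2q^{2K}+t'^2\ge t_Kq^{2K}+t'^2$.
\end{itemize}
Combining, $\delta_q(t)\le t_Kq^{2K}+t'^2\le t^2$.

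Since the $e_i$ increase, each integer in $(\star)$ satisfies
\[
e_i+qe_j\le (q+1)e_d\le (q+1)(d-1)^2 .
\]
The only step needing care is the carry-free argument behind $(\star)$, which is settled by observing that every digit is at most $q-1$ and occupies its own position.
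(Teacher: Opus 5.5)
Your proof is correct and follows the paper's argument: strict monotonicity by comparing the highest differing digit, the carry-free interleaving of even and odd base-$q$ positions for $(\star)$, and $\delta_q(t)\le t^2$ obtained by discarding cross terms and using $t_k\le t_k^2$. The only difference is cosmetic: you peel off the leading digit by induction, whereas the paper writes the single chain $\sum_k t_kq^{2k}\le\sum_k t_k^2q^{2k}\le(\sum_k t_kq^k)^2=t^2$.
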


\begin{proof}
$\delta_q$ is strictly increasing, since it preserves the lexicographic comparison of coefficients. Let $(t_k)$ and $(s_k)$ be the base-$q$ coefficients of $i-1$ and $j-1$. Then $e_i+qe_j=\sum_kt_kq^{2k}+\sum_ks_kq^{2k+1}$: the coefficients of $e_i$ sit on even positions, those of $qe_j$ on odd positions, and all are smaller than $q$, so the two expansions interleave and this is the base-$q$ expansion of the sum, with coefficients $t_0,s_0,t_1,s_1,\dots$ By uniqueness of the expansion, the even positions give back $i-1$ and the odd positions $j-1$, so the ordered pair $(i,j)$ is recovered and $(\star)$ holds. For instance, for $q=2$ and $(i,j)=(2,3)$, $e_2+2e_3=1+8=9=\mathtt{1001}_2$, whose even positions read $1=i-1$ and odd positions read $2=j-1$. Finally $t_k^2\ge t_k$ gives $\delta_q(t)\le\sum_kt_k^2q^{2k}\le(\sum_kt_kq^k)^2=t^2$, whence $e_d\le(d-1)^2$.
\end{proof}

\subsection{The encoding and the reduction}\label{sec:encoding}

We now fix the field and encode the unknown. Let $N$ be the smallest integer with $N\equiv\pm1\pmod 6$ and $N>(q+1)(d-1)^2$, so $N=O(qd^2)$; the congruence will be used in Section~\ref{sec:gab}. We build $E=\F_q[X]/(P)$ with $P$ irreducible of degree $N$, found in deterministic polynomial time by Shoup's algorithm \cite{Sho90}, and let $\alpha$ be the class of $X$. Since $P$ is the minimal polynomial of $\alpha$, the powers $1,\alpha,\dots,\alpha^{N-1}$ form an $\F_q$-basis of $E$: a non-trivial relation would give a non-zero polynomial of degree less than $N$ vanishing at $\alpha$.

The quadratic-system unknown is encoded as follows. The unknown is a single vector $x=(x_1,\dots,x_d)\in\F_q^d$: its coordinate $x_i\in\F_q$ is the value given to the $i$-th unknown of the system, with the numbering $x_1=h$, $x_2,\dots,x_{v+1}$ the values of $z_1,\dots,z_v$ and the last $c$ coordinates the values of the $w_C$, as in the example of Section~\ref{sec:sat}; $Q_k(x)\in\F_q$ is the value of the $k$-th equation at this point, and we look for $x\ne0$ with $Q_k(x)=0$ for all $k$. The aim of this step is to replace the $d$ coordinates of $x$ by one element of the field $E$. Let $U=\langle\alpha^{e_1},\dots,\alpha^{e_d}\rangle_{\F_q}$, a subspace of dimension $d$ since the $\alpha^{e_i}$ are distinct elements of a basis, and define
\[
\phi:\F_q^d\longrightarrow U,\qquad x=(x_1,\dots,x_d)\longmapsto a=x_1\alpha^{e_1}+\dots+x_d\alpha^{e_d}.
\]
Thus the $x_i$ are the coordinates of $a$ in the basis $(\alpha^{e_i})$ of $U$, and $\phi$ is an $\F_q$-linear bijection: looking for $x$ or looking for $a\in U$ is the same thing, and $a\ne0$ if and only if $x\ne0$. For $q=2$ and $d=2$ one has $a=x_1+x_2\alpha$, and $a^3=a\cdot a^2=(x_1+x_2\alpha)(x_1+x_2\alpha^2)=x_1^2+x_1x_2(\alpha+\alpha^2)+x_2^2\alpha^3$: each monomial of the system comes out with its own coefficient in $E$, which is the point of the encoding. In general, since $x_i^q=x_i$,
\[
a^{q+1}=a\cdot a^q=\sum_{i\le j}x_ix_j\,\mu_{ij},\qquad \mu_{ii}=\alpha^{(q+1)e_i},\quad \mu_{ij}=\alpha^{e_i+qe_j}+\alpha^{e_j+qe_i}\ (i<j).
\]
By Lemma~\ref{lem:expo} and $N>(q+1)(d-1)^2$, the elements $\mu_{ij}$ involve pairwise disjoint sets of basis vectors $\alpha^e$, $e<N$, and are therefore linearly independent over $\F_q$.

Write $Q_k(x)=\sum_{i\le j}c^{(k)}_{ij}x_ix_j$, where $c^{(k)}_{ij}\in\F_q$ is the coefficient of the monomial $x_ix_j$ in the $k$-th equation; the coordinates of $x$ are the values of $h$, of the $z_i$ and of the $w_C$, in this order, as in the example of Section~\ref{sec:sat}. For instance the equation $z_1^2-hz_1$ reads $x_2^2-x_1x_2$, so that $c_{22}=1$, $c_{12}=-1$ and all other coefficients vanish. Since the $\mu_{ij}$ are independent, Lemma~\ref{lem:trace} lets us prescribe a linear form on them: we choose $t_k\in E$ with $\Tr(t_k\mu_{ij})=c^{(k)}_{ij}$ for all $i\le j$. Since $x_ix_j\in\F_q$ and the trace is $\F_q$-linear,
\[
\Tr(t_k\,a^{q+1})=\sum_{i\le j}x_ix_j\Tr(t_k\mu_{ij})=Q_k(x)\qquad\text{for all }x\in\F_q^d .
\]
Finally put $W=\{z\in E:\Tr(t_kz)=0,\ k=1,\dots,s\}$, a subspace computed by linear algebra.

\begin{proposition}\label{prop:mq}
The quadratic system has a non-zero solution if and only if $(U,W)$ is a yes-instance of $\Piv$.
\end{proposition}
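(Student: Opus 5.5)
The plan is to transport solutions back and forth along the bijection $\phi:\F_q^d\to U$ and to use the identity $\Tr(t_k\,a^{q+1})=Q_k(x)$ for $a=\phi(x)$, established just before the statement. Both directions then reduce to this identity together with the definition of $W$.

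For the forward direction, let $x\neq0$ be a common zero of $Q_1,\dots,Q_s$, and put $a=\phi(x)$. Since $\phi$ is an $\F_q$-linear bijection onto $U$, we have $a\in U$ and $a\neq0$. For each $k$, the identity gives $\Tr(t_k a^{q+1})=Q_k(x)=0$, so $a^{q+1}$ satisfies all the defining equations of $W$. Hence $a$ witnesses that $(U,W)$ is a yes-instance.

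Conversely, let $a\in U\setminus\{0\}$ with $a^{q+1}\in W$. Surjectivity of $\phi$ gives $x=\phi^{-1}(a)\in\F_q^d$, and $x\neq0$ because $a\neq0$. Membership in $W$ gives $\Tr(t_k a^{q+1})=0$ for every $k$, and the identity turns this into $Q_k(x)=0$. So $x$ is a non-zero solution of the system.

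I also need $N\equiv\pm1\pmod 6$ so that $(N,U,W)$ is a legal instance of $\Piv$; this holds by the choice of $N$. For the reduction as a whole I would record polynomiality separately: Shoup's algorithm gives $P$, the $\mu_{ij}$ are explicit sums of basis monomials, each $t_k$ comes from solving a linear system via Lemma~\ref{lem:trace}, and $W$ is a kernel computation. None of this is needed for the equivalence itself.

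There is no genuine obstacle here, since the substantive work is in the independence of the $\mu_{ij}$ (Lemma~\ref{lem:expo}) and in the trace identity, both already done. The only point to check carefully is that the identity holds for every $x\in\F_q^d$ and not merely generically. This relies on $x_i^q=x_i$ and on the $\F_q$-linearity of $\Tr$, both of which were used in deriving it.
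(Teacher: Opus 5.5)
Your proof is correct and is essentially the paper's argument: both transport solutions along the linear bijection $\phi$ and combine the identity $\Tr(t_k a^{q+1})=Q_k(x)$, valid for all $x\in\F_q^d$, with the definition of $W$. Your remark that $N\equiv\pm1\pmod 6$ makes $(N,U,W)$ a legal instance of $\Piv$ is a harmless addition, which the paper secures through its choice of $N$.
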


\begin{proof}
Let $x\in\F_q^d$ and $a=\phi(x)\in U$. Since $\phi$ is a linear bijection, $a\ne0$ if and only if $x\ne0$. By definition of $W$, $a^{q+1}\in W$ if and only if $\Tr(t_ka^{q+1})=0$ for all $k$, and by the choice of the $t_k$ this reads $Q_k(x)=0$ for all $k$. Hence if $x\ne0$ is a common zero of the $Q_k$, then $a=\phi(x)$ is a non-zero element of $U$ with $a^{q+1}\in W$; conversely, if $a\in U\setminus\{0\}$ satisfies $a^{q+1}\in W$, then $x=\phi^{-1}(a)$ is non-zero and $Q_k(x)=\Tr(t_ka^{q+1})=0$ for all $k$.
\end{proof}

\begin{corollary}\label{cor:tsm}
For every fixed prime power $q$, $\Piv$ is NP-complete under deterministic polynomial-time reductions.
\end{corollary}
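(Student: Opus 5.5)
The corollary follows from the two reductions already established, together with membership in NP. I will treat the two directions separately.

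\emph{Membership in NP.} As noted after the definition of $\Piv$, a certificate is an element $a\in U\setminus\{0\}$, written by its $\F_q$-coordinates in the given basis of $U$. Verification is polynomial in the input size, which is polynomial in $N$ and $\log q$:
\begin{itemize}
\item compute $a$ in $E=\F_q[X]/(P)$, with $P$ provided as part of the field description;
\item compute $a^{q+1}$ by square-and-multiply, which takes $O(\log q)$ multiplications in $E$;
\item test whether $a^{q+1}$ lies in $W$ by Gaussian elimination against the given basis of $W$.
\end{itemize}

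\emph{NP-hardness.} I will compose the reductions of Proposition~\ref{prop:sat} and Proposition~\ref{prop:mq}. Starting from a 3-SAT formula with $v$ variables and $c$ clauses, Section~\ref{sec:sat} outputs $s=v+3c$ homogeneous quadratic forms in $d=1+v+c$ unknowns. This is done deterministically and in linear time. Section~\ref{sec:mq} then produces $(N,U,W)$, and I will check that each step is deterministic and polynomial in $d$ and $s$ for fixed $q$:
\begin{itemize}
\item $N$ is the smallest integer with $N\equiv\pm1\pmod 6$ and $N>(q+1)(d-1)^2$. Hence $N\le(q+1)(d-1)^2+6=O(qd^2)$.
\item The irreducible polynomial $P$ of degree $N$ is obtained deterministically by Shoup's algorithm \cite{Sho90}, in time polynomial in $N$ for fixed $q$.
\item The exponents $e_i=\delta_q(i-1)$ are computed directly from base-$q$ expansions. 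The basis $(\alpha^{e_i})$ of $U$ is then read off immediately, since each $\alpha^{e_i}$ is a standard basis vector of $E$.
\item The elements $\mu_{ij}$ are sums of at most two monomials $\alpha^e$ with $e<N$.
\item Each $t_k$ is found by solving the linear system of Lemma~\ref{lem:trace}, whose matrix entries are traces $\Tr(\alpha^e)$, all computable in polynomial time.
\item $W$ is the common kernel of $s$ linear forms on $E$, so it is found by Gaussian elimination of size $N\times s$.
\end{itemize}
Proposition~\ref{prop:sat} and Proposition~\ref{prop:mq} give the chain of equivalences: the formula is satisfiable if and only if the system has a non-zero solution, if and only if $(U,W)$ is a yes-instance. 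Since 3-SAT is NP-complete, $\Piv$ is NP-hard under deterministic polynomial reductions.

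\emph{Expected obstacle.} The only point needing care is that the whole reduction is deterministic. The one place where randomness usually enters is the construction of the field $\F_{q^N}$, and Shoup's deterministic algorithm handles it for fixed $q$. Every other step is explicit linear algebra over $\F_q$, so no further difficulty is expected. The trivial formula is handled by mapping it to a fixed yes-instance, as in Section~\ref{sec:sat}.
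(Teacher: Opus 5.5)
Your proposal is correct and follows the paper's proof: membership in NP via the certificate $a$, and NP-hardness by composing the deterministic reductions of Propositions~\ref{prop:sat} and~\ref{prop:mq}. Your step-by-step check of determinism and polynomial running time (Shoup's algorithm, the closed-form exponents, and the linear algebra for the $t_k$ and for $W$) spells out in detail what the paper states in one line.
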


\begin{proof}
$\Piv$ is in NP: a certificate is $a$, and the conditions $a\in U\setminus\{0\}$ and $a^{q+1}\in W$ are checked by linear algebra over $\F_q$. NP-hardness follows from Propositions~\ref{prop:sat} and~\ref{prop:mq}, whose composition is a deterministic polynomial-time reduction from 3-SAT to $\Piv$.
\end{proof}

This completes the second step. Its output is the instance $(N,U,W)$ of $\Piv$, with $U$ given by the basis $\alpha^{e_1},\dots,\alpha^{e_d}$ and $W$ by a basis computed from the $t_k$; all the non-linearity of the quadratic system is now carried by the single operation $a\mapsto a^{q+1}$, and the two remaining conditions, $a\in U$ and $a^{q+1}\in W$, are linear. The next section shows how a Gabidulin decoding instance can impose exactly these three things.

\section{From $\Piv$ to $\GabDec$}\label{sec:gab}

\subsection{What has to be shown}

Section~\ref{sec:mq} turned a homogeneous quadratic system into an instance $(N,U,W)$ of $\Piv$: a field $E=\F_{q^N}$ with $N\equiv\pm1\pmod 6$, and two subspaces $U,W\subset E$ given by bases, such that the system has a non-zero solution if and only if some $a\in U\setminus\{0\}$ has $a^{q+1}\in W$. All the non-linearity of the original problem is now carried by the twisted square $a\mapsto a^{q+1}$, while the conditions $a\in U$ and $a^{q+1}\in W$ are linear. In this last step we start from such an instance and construct an instance $(g,y)$ of $\GabDec$, that is, a field $K=\F_{q^m}$, a vector $g\in K^n$ of rank $n$ and a received word $y\in K^n$, such that some $\lambda\in K$ satisfies $\rk(y-\lambda g)\le n-2$ if and only if some $a\in U\setminus\{0\}$ has $a^{q+1}\in W$. The received word is not given, it is designed.

By Lemma~\ref{lem:plane}, the decoding instance is a yes-instance exactly when some plane of the support satisfies the eigenplane equation. One direction consists in exhibiting such a plane from a solution of $\Piv$. The converse is the heart of the proof: we show that every solution plane is necessarily of the form $\langle a,\tau a^{q+1}\rangle$ with $a\in U\setminus\{0\}$ and $a^{q+1}\in W$. The construction and the proof are organized as follows. In Section~\ref{sec:K} we build a cubic extension $K=E(\tau)$ of $E$, in which one equality splits into three independent equalities in $E$, one per coefficient on $1,\tau,\tau^2$; this is what allows a single eigenplane equation to impose several conditions. In Section~\ref{sec:code} we choose the code, through its support $G=U\oplus\tau W$, so that the two linear conditions of $\Piv$ are built into the code itself. In Section~\ref{sec:word} we choose the received word, $\tilde y(a+\tau b)=a^{q^2}+\tau b^q$, made of two Frobenius twists with different exponents, and check that it is a legitimate received word. Finally, in Section~\ref{sec:shape}, we read the eigenplane equation coefficient by coefficient: the coefficients on $1$ and $\tau^2$ force the shape $\langle a,\tau b\rangle$, the coefficient on $\tau$ forces $b\in\F_q^*a^{q+1}$, and this proves the equivalence (Proposition~\ref{prop:gab}) and the main theorem.

\subsection{The field $K$}\label{sec:K}

We first build the field in which the code will live. Fix once and for all an irreducible polynomial $R\in\F_q[Y]$ of degree $3$; it depends only on $q$, and it exists since the field $\F_{q^3}$ exists and the minimal polynomial over $\F_q$ of a generator of its cyclic multiplicative group is irreducible of degree $3$ (for $q=2$, $Y^3+Y+1$); it is found by exhaustive search in constant time. Put $K=E[Y]/(R)$ and let $\tau$ be the class of $Y$.

\begin{lemma}\label{lem:K}
$K$ is a field, $K=\F_{q^{3N}}$, and $1,\tau,\tau^2$ is a basis of $K$ over $E$.
\end{lemma}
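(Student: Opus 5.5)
The key step is to show that $R$ stays irreducible over $E=\F_{q^N}$; everything else follows formally. Once this is established, $E[Y]/(R)$ is a quotient of a polynomial ring over a field by an irreducible polynomial. It is therefore a field, of degree $3$ over $E$, with $E$-basis $1,\tau,\tau^2$: every class has a unique representative of degree $<3$ by Euclidean division. Its cardinality is then $(q^N)^3=q^{3N}$, and since finite fields of the same order are isomorphic, $K=\F_{q^{3N}}$.

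For the irreducibility, I will argue through roots. Since $R$ is irreducible of degree $3$ over $\F_q$, any root $\theta$ of $R$, taken in an algebraic closure, generates $\F_q(\theta)=\F_{q^3}$. If $R$ were reducible over $E$, a cubic would have a factor of degree $1$, so $R$ would have a root $\theta\in E$. Then $\F_{q^3}=\F_q(\theta)\subseteq E=\F_{q^N}$, which forces $3\mid N$. This contradicts $N\equiv\pm1\pmod 6$, which gives $\gcd(N,3)=1$.

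An equivalent route avoids the reducibility argument: the compositum of $\F_{q^3}$ and $\F_{q^N}$ is $\F_{q^{\mathrm{lcm}(3,N)}}=\F_{q^{3N}}$, so $[E(\theta):E]=3$ and the minimal polynomial of $\theta$ over $E$ is the cubic $R$. I would use the root argument, as it is shortest.

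There is no real obstacle here. The only point needing care is the reduction from ``reducible'' to ``has a root in $E$'', which uses that $\deg R=3$, and the use of the divisibility fact that $\F_{q^3}\subseteq\F_{q^N}$ implies $3\mid N$. This is where the condition on $N$ from the definition of $\Piv$ enters.
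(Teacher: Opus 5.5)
Your proof is correct and follows the same route as the paper: show that $R$ stays irreducible over $E$ because $\gcd(3,N)=1$, then conclude that $E[Y]/(R)$ is a field of degree $3$ over $E$ with basis $1,\tau,\tau^2$ and order $q^{3N}$. The only difference is in the irreducibility step. The paper cites the general fact that an irreducible cubic over $\F_q$ factors over $\F_{q^N}$ into $\gcd(3,N)$ factors of equal degree. Your root argument (a reducible cubic has a root in $E$, forcing $\F_{q^3}\subseteq\F_{q^N}$ and hence $3\mid N$) proves exactly the special case needed, without relying on that fact.
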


\begin{proof}
An irreducible polynomial of degree $3$ over $\F_q$ factors over $\F_{q^N}$ into $\gcd(3,N)$ irreducible factors of equal degree, and $\gcd(3,N)=1$ since $N\equiv\pm1\pmod 6$. Hence $R$ stays irreducible over $E$, $[K:E]=3$, and $1,\tau,\tau^2$ is a basis by the argument used for $\alpha$.
\end{proof}

Every $z\in K$ is thus written uniquely $z=z_0+z_1\tau+z_2\tau^2$ with $z_i\in E$, and an equality in $K$ is equivalent to three equalities in $E$, one for each power of $\tau$; we call $z_i$ the component of $z$ on $\tau^i$.

\subsection{The code}\label{sec:code}

With $K$ in hand we choose the code, that is, its support. Recall from Section~\ref{sec:prelim} that a code $\Gab(g,1)$ is determined by the vector $g=(g_1,\dots,g_n)$, whose coordinates are $\F_q$-linearly independent, and that its \emph{support} is the $\F_q$-space $G=\langle g_1,\dots,g_n\rangle_{\F_q}\subset\F_{q^m}$ of dimension $n$ spanned by these coordinates: $g$ is a basis of $G$, a received word is an $\F_q$-linear map $\tilde y:G\to\F_{q^m}$, and decoding asks for a plane of $G$ on which $\tilde y$ acts as a scalar. Choosing the code thus amounts to choosing $G$ together with a basis. Since the expected plane contains a vector $a\in U$ and a vector $\tau b$ with $b\in W$, we take
\[
G=U\oplus\tau W=\{a+\tau b:a\in U,\ b\in W\}\subset K .
\]
The sum is direct, because $1$ and $\tau$ are linearly independent over $E$, so, writing $r=\dim W$ (with $N-s\le r\le N$, since $W$ is cut out by $s$ linear conditions), $n:=\dim_{\F_q}G=\dim U+\dim W=d+r\le d+N$. We choose the extension field $\F_{q^m}=K$ with $m=3N$, and for $g\in K^n$ the concatenation of the basis $\alpha^{e_1},\dots,\alpha^{e_d}$ of $U$ and of $\tau w_1,\dots,\tau w_r$ for a basis $w_1,\dots,w_r$ of $W$ (computed by linear algebra from the $t_k$); this is an $\F_q$-basis of $G$, so $\rk(g)=n$. Note that $n\ge d\ge2$, so that radius $n-2$ makes sense. Moreover $N>(q+1)(d-1)^2\ge d$ since $d\ge2$, hence
\[
n\le d+N\le 2N<3N=m,
\]
so that the code satisfies the length constraint $n\le m$ of Gabidulin codes.

\subsection{The received word}\label{sec:word}

The code being fixed, it remains to choose the received word; this is where the twisted square is produced. Since $(g,y)$ is ours to choose, and since by Section~\ref{sec:prelim} received words correspond exactly to $\F_q$-linear maps $G\to K$, we define the map $\tilde y$ and deduce $y$ from it. Every element of $G$ is uniquely $a+\tau b$ with $a\in U$, $b\in W$, so it suffices to give $\tilde y(a+\tau b)$. We set
\[
(\ast)\qquad \tilde y(a+\tau b)=a^{q^2}+\tau\,b^{q}.
\]

\begin{lemma}\label{lem:word}
The map $(\ast)$ is $\F_q$-linear. The corresponding received word $y_j=\tilde y(g_j)$ is $y_j=\alpha^{e_iq^2}$ for $g_j=\alpha^{e_i}$ and $y_j=\tau w_l^q$ for $g_j=\tau w_l$, and is computed in polynomial time.
\end{lemma}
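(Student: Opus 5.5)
The plan is to check three things in turn: that $(\ast)$ is well defined, that it is $\F_q$-linear, and that the resulting word can be computed.

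\emph{Well-definedness.} Since $G=U\oplus\tau W$ is a direct sum, as noted in Section~\ref{sec:code}, every element of $G$ has a unique decomposition $a+\tau b$ with $a\in U$ and $b\in W$. Hence $(\ast)$ defines a map $G\to K$ with no ambiguity.

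\emph{Linearity.} I would write $\tilde y$ as a composite of $\F_q$-linear maps. The first is the projection $G\to U\times W$, $a+\tau b\mapsto(a,b)$, which is the inverse of the linear isomorphism $(a,b)\mapsto a+\tau b$ of the direct sum. Next come the Frobenius powers $a\mapsto a^{q^2}$ and $b\mapsto b^q$ on $E$. These are additive because we are in characteristic $p$ and $q$ is a power of $p$. They also fix scalars, since $c^q=c$ for $c\in\F_q$. Finally there is the map $(u,v)\mapsto u+\tau v$ into $K$, which is $\F_q$-linear. By Lemma~\ref{lem:ytilde}, $\tilde y$ then corresponds to a unique received word $y$, with $y_j=\tilde y(g_j)$.

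\emph{The explicit formula.} For $g_j=\alpha^{e_i}$ the decomposition is $a=\alpha^{e_i}$, $b=0$, so $y_j=(\alpha^{e_i})^{q^2}=\alpha^{e_iq^2}$. For $g_j=\tau w_l$ the decomposition is $a=0$, $b=w_l$, so $y_j=\tau w_l^q$.

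\emph{Computability.} This is the only step that needs any care, because the exponents $e_iq^2$ and $q$ are large. Both $\alpha^{e_iq^2}$ and $w_l^q$ can be obtained by square-and-multiply in $E=\F_q[X]/(P)$. This takes $O(\log(e_iq^2))=O(\log(qd))$ multiplications, each polynomial in $N$, for a fixed $q$. Alternatively, one can precompute the $\F_q$-matrix of the Frobenius on the basis $1,\alpha,\dots,\alpha^{N-1}$ and apply it once or twice. Multiplying by $\tau$ only places the element of $E$ as the component on $\tau$ in the basis $1,\tau,\tau^2$ of Lemma~\ref{lem:K}. There are $n\le N+d$ coordinates, so the whole word $y$ is computed in polynomial time.
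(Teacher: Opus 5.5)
Your proposal is correct and follows the paper's proof. In both, $\tilde y$ is written as the composite of the linear projection $a+\tau b\mapsto(a,b)$, the $\F_q$-linear Frobenius powers, and the linear map $(u,v)\mapsto u+\tau v$. Your explicit formula and square-and-multiply cost count fill in the polynomial-time claim, which the paper leaves implicit.
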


\begin{proof}
The map $a+\tau b\mapsto(a,b)$ is linear; $a\mapsto a^{q^2}$ and $b\mapsto b^q$ are additive and satisfy $(\lambda a)^{q^2}=\lambda a^{q^2}$ for $\lambda\in\F_q$; the multiplication by $\tau$ is linear.
\end{proof}

The two Frobenius twists have distinct exponents on purpose: equal exponents on $a$ and $b$ would force the plane to be $\langle a,\tau b\rangle$ as well, but would only give $b\in\F_q^*a$, whereas the exponents $q^2$ and $q$ produce $b\in\F_q^*a^{q+1}$. For $x=a+\tau b$ and $x'=a'+\tau b'$ in $G$, the components of $\tilde y(x)x'$ on $1,\tau,\tau^2$ are
\[
a^{q^2}a',\qquad a^{q^2}b'+b^qa',\qquad b^qb' ,
\]
of degree at most $2$ in $\tau$, so no reduction modulo $R$ occurs.

\subsection{The shape of solution planes}\label{sec:shape}

We prove the converse direction by showing that every solution plane has the form $\langle a,\tau b\rangle$. The first lemma is the elementary fact that makes the coefficients on $1$ and $\tau^2$ useful.

\begin{lemma}\label{lem:Fq}
Let $a,a'\in E$ be non-zero with $a^{q^2}a'=a'^{q^2}a$. Then $a'\in\F_q\,a$.
\end{lemma}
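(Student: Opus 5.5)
Set $c=a'/a\in E^*$, which is allowed since $a\ne0$, and rewrite the hypothesis in terms of $c$. Substituting $a'=ca$ into $a^{q^2}a'=a'^{q^2}a$ gives $a^{q^2}\,c\,a=c^{q^2}a^{q^2}\,a$. Dividing by the non-zero element $a^{q^2+1}$ yields
\[
c^{q^2}=c .
\]
So $c$ is fixed by the automorphism $z\mapsto z^{q^2}$. The fixed field of this map in a common algebraic closure is $\F_{q^2}$, hence $c\in\F_{q^2}\cap E$.

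It remains to compute this intersection. Inside an algebraic closure, $\F_{q^2}\cap\F_{q^N}=\F_{q^{\gcd(2,N)}}$. We have $\gcd(2,N)=1$ because $N\equiv\pm1\pmod 6$ is odd; this is exactly the parity condition announced for $\Piv$. Therefore $c\in\F_q$, that is, $a'=ca\in\F_q\,a$.

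If one prefers to avoid the algebraic closure, argue as follows. Since $c\in E$, we have $c^{q^N}=c$, and also $c^{q^2}=c$. Because $N$ is odd, $N=2k+1$ for some $k$. Iterating the second identity gives $c^{q^{2k}}=c$, and then $c=c^{q^N}=(c^{q^{2k}})^q=c^q$. Hence $c$ is a root of $X^q-X$, whose roots in the field $E$ are exactly the $q$ elements of $\F_q$.

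The argument is elementary. The only point needing care is the use of $N$ odd. Without it the conclusion fails: if $N$ were even, any $c\in\F_{q^2}\setminus\F_q$ would give a counterexample. The hypothesis $a'\ne0$ plays no role beyond the statement; we only need $a\ne0$ to divide.
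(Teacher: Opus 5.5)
Your proof is correct and is essentially the paper's. The paper also puts $u=a'/a$, derives $u^{q^2}=u$, and concludes with $u=u^{q^N}=(u^{q^{2k}})^q=u^q$ for $N=2k+1$. Your first variant via $\F_{q^2}\cap\F_{q^N}=\F_{q^{\gcd(2,N)}}$ is a more conceptual phrasing of the same parity step.
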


\begin{proof}
Put $u=a'/a$. Dividing by $a^{q^2}a$ gives $u^{q^2}=u$. Since $u\in\F_{q^N}$, also $u^{q^N}=u$. Write $N=2k+1$; applying $k$ times $u\mapsto u^{q^2}$ gives $u^{q^{2k}}=u$, hence $u=u^{q^N}=(u^{q^{2k}})^q=u^q$, and $u\in\F_q$.
\end{proof}

This is the only place where the parity of $N$ is used.

\begin{lemma}\label{lem:shape}
If a plane $L\subset G$ satisfies $L\subset\Ker(\tilde y-\lambda)$ for some $\lambda\in K$, then $L=\langle a,\tau b\rangle$ with $a\in U\setminus\{0\}$ and $b\in W\setminus\{0\}$.
\end{lemma}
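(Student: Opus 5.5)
The plan is to use the pairwise form of the eigenplane equation and read it on the components $1$ and $\tau^2$. By Lemma~\ref{lem:allpairs}, (ii)$\Rightarrow$(iii), the hypothesis $L\subset\Ker(\tilde y-\lambda)$ gives $\tilde y(x)\,x'=\tilde y(x')\,x$ for all $x,x'\in L$. Write $x=a+\tau b$ and $x'=a'+\tau b'$ with $a,a'\in U$ and $b,b'\in W$. Using the components computed after Lemma~\ref{lem:word}, where no reduction modulo $R$ occurs, together with the uniqueness of coordinates on $1,\tau,\tau^2$ (Lemma~\ref{lem:K}), the single equality in $K$ splits into three equalities in $E$. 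The two outer ones are
\[
a^{q^2}a'=a'^{q^2}a,\qquad b^qb'=b'^qb .
\]
The middle component is not needed for this lemma; it is reserved for forcing $b\in\F_q^*a^{q+1}$ afterwards.

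Next, consider the two $\F_q$-linear projections $\pi_1:L\to U$, $a+\tau b\mapsto a$, and $\pi_2:L\to W$, $a+\tau b\mapsto b$, which are well defined since $G=U\oplus\tau W$ is direct. I claim each image has dimension at most $1$.
\begin{itemize}
\item For $\pi_1$: if $a,a'$ are two non-zero elements of $\pi_1(L)$, the first equality and Lemma~\ref{lem:Fq} give $a'\in\F_q a$.
\item For $\pi_2$: if $b,b'$ are non-zero in $\pi_2(L)$, put $u=b'/b$. Dividing the second equality by $b^{q}b$ gives $u^q=u$, so $u\in\F_q$ and $b'\in\F_q b$. This case does not even need the parity of $N$.
\end{itemize}

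Since $\pi_1\times\pi_2$ is injective on $L$, we get
\[
2=\dim L\le\dim\pi_1(L)+\dim\pi_2(L)\le 2 .
\]
Hence both images are lines, say $\pi_1(L)=\F_q a$ and $\pi_2(L)=\F_q b$ with $a\in U\setminus\{0\}$ and $b\in W\setminus\{0\}$. Then $L\subset\F_q a\oplus\F_q\tau b=\langle a,\tau b\rangle$, a space of dimension $2$, so equality holds and $L=\langle a,\tau b\rangle$.

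I expect no real obstacle. The only delicate point is remembering that the equation must be applied to arbitrary pairs of vectors of $L$, not just to one basis, and Lemma~\ref{lem:allpairs} is exactly what justifies this. The non-trivial arithmetic input, that $u^{q^2}=u$ forces $u\in\F_q$ for odd $N$, is already isolated in Lemma~\ref{lem:Fq}.
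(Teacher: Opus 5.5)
Your proposal is correct and follows the paper's proof essentially step for step: you invoke Lemma~\ref{lem:allpairs} to get the equation for all pairs, read its components on $1$ and $\tau^2$, bound each projection of $L$ onto $U$ and $W$ by dimension one (via Lemma~\ref{lem:Fq} and $u^q=u$), and use injectivity to get $2\le\dim\pi_1(L)+\dim\pi_2(L)\le 2$. Your last step, $L\subset\F_q a\oplus\F_q\tau b$ with equality by dimension, only rephrases the paper's observation that the projection map $L\to A\times B$ is bijective.
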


\begin{proof}
By Lemma~\ref{lem:allpairs}, for all $x=a+\tau b$ and $x'=a'+\tau b'$ in $L$ we have $\tilde y(x)x'=\tilde y(x')x$, and the components on $1$ and $\tau^2$ give $a^{q^2}a'=a'^{q^2}a$ and $b^qb'=b'^qb$. Let $A=\{a\in U:\exists b,\ a+\tau b\in L\}$, the image of $L$ under the linear map $a+\tau b\mapsto a$; it is a subspace of $U$. If $\dim A\ge2$, take $a,a'\in A$ linearly independent and $x,x'\in L$ realizing them: the first relation and Lemma~\ref{lem:Fq} give $a'\in\F_qa$, a contradiction. Hence $\dim A\le1$, and likewise $\dim B\le1$ for $B=\{b\in W:\exists a,\ a+\tau b\in L\}$, using $(b'/b)^q=b'/b$. The map $L\to A\times B$, $a+\tau b\mapsto(a,b)$, is linear and injective, so $2=\dim L\le\dim A+\dim B\le2$. Thus $A=\F_qa$, $B=\F_qb$ with $a,b\ne0$, the map is bijective, and $a$ and $\tau b$ belong to $L$.
\end{proof}

Note that the argument never assumes that a particular basis of $L$ has non-zero components: it works on the projections $A$ and $B$ of the whole plane. With the shape of solution planes in hand, the coefficient on $\tau$ finishes the job, and the converse direction is a direct computation.

\begin{proposition}\label{prop:gab}
There exists $\lambda\in K$ with $\rk(y-\lambda g)\le n-2$ if and only if there exists $a\in U\setminus\{0\}$ with $a^{q+1}\in W$.
\end{proposition}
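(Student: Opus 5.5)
We prove the two directions separately, using Lemma~\ref{lem:plane} and Lemma~\ref{lem:rank} to replace the decoding condition by the existence of a solution plane in $G$.

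\emph{Forward direction (from $\Piv$ to a solution plane).} Let $a\in U\setminus\{0\}$ with $b:=a^{q+1}\in W$. Then $b\neq0$, and $x_1=a$, $x_2=\tau b$ are $\F_q$-linearly independent elements of $G$, since $1,\tau$ are independent over $E$. We check the eigenplane equation $\tilde y(x_1)x_2=\tilde y(x_2)x_1$. By $(\ast)$, $\tilde y(a)=a^{q^2}$ and $\tilde y(\tau b)=\tau b^q$. The equation therefore becomes $a^{q^2}\tau b=\tau b^q a$, that is, $a^{q^2}b=b^qa$. Substituting $b=a^{q+1}$, both sides equal $a^{q^2+q+1}$, because $b^q a=a^{q^2+q}a$ and $a^{q^2}b=a^{q^2}a^{q+1}$. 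Lemma~\ref{lem:plane} then gives $\lambda$ (explicitly $\lambda=a^{q^2-1}$) with $\rk(y-\lambda g)\le n-2$.

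\emph{Converse direction (from a solution plane to $\Piv$).} Suppose such a $\lambda$ exists. By Lemma~\ref{lem:rank}, $\Ker(\tilde y-\lambda)$ has dimension at least $2$ and so contains a plane $L$. Lemma~\ref{lem:shape} gives $L=\langle a,\tau b\rangle$ with $a\in U\setminus\{0\}$ and $b\in W\setminus\{0\}$. We apply Lemma~\ref{lem:allpairs}(iii) to $x=a$ and $x'=\tau b$ and read the component on $\tau$. The components listed after Lemma~\ref{lem:word}, with $(a,0)$ and $(0,b)$, give $a^{q^2}b=b^qa$. The components on $1$ and $\tau^2$ vanish here, so the $\tau$-component carries the whole content of the equation.

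The main step is to solve $a^{q^2}b=b^qa$ for $b$. Set $u=b/a^{q+1}\in E^*$. Then $b^q=u^qa^{q^2+q}$, and the equation becomes $a^{q^2}ua^{q+1}=u^qa^{q^2+q}a$, i.e.\ $u=u^q$. Hence $u\in\F_q^*$, so $a^{q+1}=u^{-1}b\in W$ because $W$ is an $\F_q$-space. This is exactly a yes-instance of $\Piv$.

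I expect no real obstacle, since Lemma~\ref{lem:shape} already does the hard work. The only care needed is to keep the $\tau$-component bookkeeping correct: the products have degree at most $2$ in $\tau$, so no reduction modulo $R$ interferes.
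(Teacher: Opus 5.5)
Your proof is correct and follows the paper's argument: Lemma~\ref{lem:shape} reduces a solution plane to $\langle a,\tau b\rangle$, and the $\tau$-component $a^{q^2}b=ab^q$ forces $b\in\F_q^*a^{q+1}$, while the converse uses the plane $\langle a,\tau a^{q+1}\rangle$ with $\lambda=a^{q^2-1}$. The differences are cosmetic. The paper divides by $ab$ and shows $\gamma^{q-1}=1$ for $\gamma=b/a^{q+1}$, where you substitute $b=ua^{q+1}$ directly. In the easy direction the paper checks $\tilde y=\lambda$ on $a$ and $\tau b$ directly, where you invoke Lemma~\ref{lem:plane}.
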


\begin{proof}
Assume the decoding instance is a yes-instance. By Lemmas~\ref{lem:rank} and~\ref{lem:shape}, some plane $\langle a,\tau b\rangle$ with $a\in U\setminus\{0\}$ and $b\in W\setminus\{0\}$ lies in $\Ker(\tilde y-\lambda)$. The component on $\tau$ of the equality $\tilde y(a)\,\tau b=\tilde y(\tau b)\,a$ reads $a^{q^2}b=ab^q$. Dividing by $ab$ gives $a^{q^2-1}=b^{q-1}$, and since $a^{q^2-1}=(a^{q+1})^{q-1}$, the element $\gamma=b/a^{q+1}$ satisfies $\gamma^{q-1}=1$, so $\gamma\in\F_q^*$; this is where the exponent $q+1$ of $\Piv$ arises. Since $W$ is an $\F_q$-subspace, $a^{q+1}=\gamma^{-1}b\in W$.

Conversely let $a\in U\setminus\{0\}$ with $a^{q+1}\in W$, put $b=a^{q+1}$, so that $\tau b\in G$, and $\lambda=a^{q^2-1}$. Then $\tilde y(a)=a^{q^2}=\lambda a$ and $\tilde y(\tau b)=\tau a^{q^2+q}=\lambda\tau b$. By linearity $\langle a,\tau b\rangle$, a plane since the sum $U\oplus\tau W$ is direct, lies in $\Ker(\tilde y-\lambda)$, and Lemma~\ref{lem:rank} gives $\rk(y-\lambda g)\le n-2$.
\end{proof}

\subsection{Proof of Theorem~\ref{thm:main}}\label{sec:proofthm}

All three steps are now established, and we can prove the main theorem, which we recall.

\medskip\noindent{\bfseries Theorem~\ref{thm:main}.}\ \emph{For every fixed prime power $q$, $\GabDec$ is NP-complete. More precisely, a 3-SAT formula with $v$ variables and $c$ clauses is mapped in deterministic polynomial time to an instance with $m=3N$ and $n\le N+d$, where $d=1+v+c$, $N\equiv\pm1\pmod 6$ and $N=O(q\,d^2)$.}
\medskip

\begin{proof}
\emph{Membership in NP.} Given $(g,y)$, a certificate is $\lambda\in\F_{q^m}$; the rank of $y-\lambda g$ is computed by Gaussian elimination over $\F_q$ on its $m\times n$ matrix of coordinates, in polynomial time.

\emph{NP-hardness.} Let $\Phi$ be a 3-SAT formula with $v$ variables and $c$ clauses, and $d=1+v+c$. Section~\ref{sec:sat} produces homogeneous quadratic forms $Q_1,\dots,Q_s$ in $d$ unknowns, with $s=v+3c$, such that $\Phi$ is satisfiable if and only if they have a common zero $x\ne0$ (Proposition~\ref{prop:sat}). Section~\ref{sec:mq} produces from them an instance $(N,U,W)$ of $\Piv$, with $N\equiv\pm1\pmod 6$ and $N=O(qd^2)$, such that the forms have a common zero $x\ne0$ if and only if some $a\in U\setminus\{0\}$ has $a^{q+1}\in W$ (Proposition~\ref{prop:mq}). Section~\ref{sec:gab} produces from $(N,U,W)$ an instance $(g,y)$ of $\GabDec$ over $K=\F_{q^{3N}}$, of length $n=\dim U+\dim W\le d+N$, such that some $\lambda$ satisfies $\rk(y-\lambda g)\le n-2$ if and only if $(N,U,W)$ is a yes-instance (Proposition~\ref{prop:gab}). Composing the three equivalences, $\Phi$ is satisfiable if and only if $(g,y)$ is a yes-instance of $\GabDec$.

\emph{Polynomial time.} Each object of the construction is computed deterministically in time polynomial in $v+c$, with the sizes of Table~\ref{tab:sizes}: the exponents $e_i$ are given by a closed formula (Lemma~\ref{lem:expo}); the irreducible polynomial of degree $N$ defining $E$ is found by Shoup's algorithm \cite{Sho90}; the polynomial $R$ of degree $3$ defining $K$ depends only on $q$; the elements $t_k$, the basis of $W$, the vector $g$ and the received word $y$ are obtained by linear algebra over $\F_q$ in dimension $O(m)=O(qd^2)$ (Lemmas~\ref{lem:trace} and~\ref{lem:word}). Since $q$ is fixed, $m,n=O(d^2)$, and the $2n$ elements of $\F_{q^m}$ forming $g$ and $y$ have $m$ coordinates each over $\F_q$, so that the output instance has size $|(g,y)|=O(nm)=O(d^4)$ symbols of $\F_q$, polynomial in the size of the formula.
\end{proof}

\begin{remark}[representation of finite fields]
The answer does not depend on how finite fields are represented. We take $E=\F_q[X]/(P)$, with $P$ found by Shoup's deterministic algorithm \cite{Sho90} in time polynomial in $N$ for fixed $q$, and $K=E[Y]/(R)$. If the input must give $\F_{q^m}$ by a single irreducible polynomial $Q$ of degree $m$, one computes $Q$ by the same algorithm and an explicit isomorphism $\psi:K\to\F_q[Z]/(Q)$ by Lenstra's deterministic algorithm \cite{Len91}, and outputs $(\psi(g),\psi(y))$. Since $\psi$ restricts to an automorphism of $\F_q$, it is semilinear over $\F_q$ and preserves rank weights; as $\psi(y-\lambda g)=\psi(y)-\psi(\lambda)\psi(g)$, the two instances have the same answer.
\end{remark}

\begin{table}[!ht]
\centering
\caption{Sizes in the reduction, for a formula with $v$ variables and $c$ clauses, $d=1+v+c$, and fixed $q$.}\label{tab:sizes}
\small
\begin{tabularx}{\textwidth}{@{}l X l@{}}
\toprule
Object & Size & Construction\\
\midrule
exponents $e_i$ & $e_d\le(d-1)^2$ & $\delta_q(i-1)$ (Lemma~\ref{lem:expo})\\
field $E=\F_{q^N}$ & $N=O(qd^2)$, $N\equiv\pm1\pmod 6$ & irreducible polynomial \cite{Sho90}\\
field $K=E(\tau)$ & $m=3N$ & fixed degree-$3$ polynomial\\
$t_k,W,g,y$ & polynomial & linear algebra\\
code & $n=d+r\le N+d$ with $r=\dim W$; $k=1$; radius $n-2$ & $\Gab(g,1)$\\
\bottomrule
\end{tabularx}
\end{table}
\FloatBarrier

\section{Consequences}\label{sec:cons}

\subsection{Complexity consequences}\label{sec:conscx}

We first derive the consequence for Gabidulin codes, then turn to the general problems $\mathsf{RSD}$ and $\mathsf{RMD}$ from Section~\ref{sec:prelim}. Maximum-likelihood decoding of a code consists in finding, for a received word, a closest codeword.

\begin{corollary}\label{cor:ml}
For every fixed prime power $q$, maximum-likelihood decoding of Gabidulin codes is NP-hard under deterministic polynomial reductions, already for codes of dimension $k=1$.
\end{corollary}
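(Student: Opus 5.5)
The plan is to deduce Corollary~\ref{cor:ml} directly from Theorem~\ref{thm:main}. We show that any algorithm for maximum-likelihood decoding of Gabidulin codes decides $\GabDec$ with only polynomial overhead. This is a Turing (in fact one-query) reduction, which is the natural notion here, since maximum-likelihood decoding is a search problem and not a decision problem.

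Concretely, given an instance $(g,y)$ of $\GabDec$ with $\rk(g)=n$, we first note that $\Gab(g,1)=\{\lambda g\}$ is a genuine Gabidulin code of dimension $k=1$. Its generator is $g$ itself, since $f(X)=\lambda X$. We then call the maximum-likelihood decoder on the code $\Gab(g,1)$ and the received word $y$. It returns a closest codeword $\lambda_0 g$. We compute $\rk(y-\lambda_0 g)$ by Gaussian elimination over $\F_q$ on the $m\times n$ coordinate matrix, and answer yes if and only if this rank is at most $n-2$.

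Correctness is immediate from the definition of a closest codeword. If some $\lambda$ satisfies $\rk(y-\lambda g)\le n-2$, then $\rk(y-\lambda_0g)\le\rk(y-\lambda g)\le n-2$, since $\lambda_0g$ minimizes the rank distance over the code. Conversely, a yes answer exhibits $\lambda_0$ itself as a witness.

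The reduction is deterministic and polynomial, because the instances produced in the proof of Theorem~\ref{thm:main} already have $n\le m$ and $g$ given explicitly. Composing with the deterministic reduction from 3-SAT of Theorem~\ref{thm:main} gives NP-hardness for every fixed $q$, already at $k=1$.

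The only point needing care is the format issue: the decoder must accept the code in the form we supply, by the vector $g$ and the field $K$. The remark on the representation of finite fields handles a change of representation via Lenstra's isomorphism $\psi$, which preserves rank distances. No real obstacle is expected.
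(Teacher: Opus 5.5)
Your proposal is correct and is essentially the paper's own proof: one call to a maximum-likelihood decoder on $(g,y)$, followed by comparing $\rk(y-\lambda_0 g)$ with $n-2$, decides $\GabDec$, which is NP-complete by Theorem~\ref{thm:main}. Your added remarks, that this is a one-query Turing reduction and that the field representation can be changed via the isomorphism $\psi$, are sound details the paper leaves implicit in this proof.
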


\begin{proof}
For $k=1$, a maximum-likelihood decoder applied to $(g,y)$ returns some $\lambda_0$ with $\rk(y-\lambda_0g)=\min_\lambda\rk(y-\lambda g)$. Comparing this rank with $n-2$ decides $\GabDec$, which is NP-complete by Theorem~\ref{thm:main}.
\end{proof}

This holds for the simplest codes of the family, which have dimension one and minimum distance $n$.

We next consider $\mathsf{RSD}$ and $\mathsf{RMD}$.

\begin{corollary}\label{cor:sd}
$\mathsf{RSD}$, the rank syndrome decoding problem for $\F_{q^m}$-linear codes, is NP-complete under deterministic polynomial reductions.
\end{corollary}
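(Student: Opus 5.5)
Membership in NP is immediate: given $(H,s,t)$, a certificate is a vector $e\in\F_{q^m}^n$. Checking $He^\top=s$ is a computation in $\F_{q^m}$. Checking $\rk(e)\le t$ is Gaussian elimination over $\F_q$ on the $m\times n$ coordinate matrix of $e$. Both take time polynomial in the input size.

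For NP-hardness I will reduce $\GabDec$ to $\mathsf{RSD}$ deterministically; Theorem~\ref{thm:main} then transfers hardness. Take an instance $(g,y)$ of $\GabDec$ with $g\in\F_{q^m}^n$ and $\rk(g)=n$. Its generator matrix is the $1\times n$ matrix $G=g$, and the target radius is $t=n-2$. As recalled in Section~\ref{sec:prelim}, $\GabDec$ is exactly bounded-distance decoding for this code, so it only remains to pass from the generator form to the syndrome form.

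\begin{enumerate}
\item \emph{Parity-check matrix.} Compute $H\in\F_{q^m}^{(n-1)\times n}$ whose rows form a basis of the orthogonal of $\langle g\rangle_{\F_{q^m}}$. Since $g\ne0$, pick an index $j$ with $g_j\ne0$. For each $i\ne j$, take the row having $g_j$ in position $i$, $-g_i$ in position $j$, and zeros elsewhere. These $n-1$ rows are independent and orthogonal to $g$, so $\Ker H=\F_{q^m}g$.
\item \emph{Syndrome.} Set $s=Hy^\top$.
\item \emph{Correctness.} If some $e$ satisfies $He^\top=s$, then $H(y-e)^\top=0$, so $y-e=\lambda g$ for some $\lambda\in\F_{q^m}$. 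Hence $e=y-\lambda g$, and conversely every $y-\lambda g$ has syndrome $s$. Therefore some $e$ with $He^\top=s$ and $\rk(e)\le n-2$ exists if and only if $\rk(y-\lambda g)\le n-2$ for some $\lambda$, which is exactly the question of $\GabDec$.
\end{enumerate}

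This map is computed deterministically in polynomial time. Composed with the deterministic reduction of Theorem~\ref{thm:main}, it gives a deterministic reduction from 3-SAT to $\mathsf{RSD}$.

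There is no real obstacle in this argument. The only points needing care are that the parity-check matrix is obtained deterministically, which the explicit construction above guarantees without any randomness, and that the field representation is fixed as in the remark on representation of finite fields. The substantive work is already done in Theorem~\ref{thm:main}.
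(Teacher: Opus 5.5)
Your proof is correct and follows the paper's argument: compute a parity-check matrix $H$ of $\Gab(g,1)$ and set $s=Hy^\top$, so that the solutions of $He^\top=s$ are exactly the vectors $y-\lambda g$. Your explicit deterministic construction of $H$ and the NP-membership check only fill in details the paper leaves implicit.
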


\begin{proof}
From an instance $(g,y)$ of $\GabDec$, compute a parity-check matrix $H$ of $\Gab(g,1)$ and $s=Hy^\top$. The solutions $e$ of $He^\top=s$ are exactly the vectors $y-\lambda g$, so a solution of rank at most $n-2$ exists if and only if the instance of $\GabDec$ is a yes-instance.
\end{proof}

\begin{corollary}\label{cor:md}
$\mathsf{RMD}$, the rank minimum distance problem for $\F_{q^m}$-linear codes, is NP-complete under deterministic polynomial reductions, already for codes of dimension $2$.
\end{corollary}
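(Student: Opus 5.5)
Membership in NP is immediate: a certificate is a non-zero codeword of rank at most $w$, written as $\mu G$ with $\mu\in\F_{q^m}^k$. Its rank is computed by Gaussian elimination over $\F_q$ on the $m\times n$ coordinate matrix. For hardness I would reduce from $\GabDec$, which is NP-complete by Theorem~\ref{thm:main}. Given an instance $(g,y)$ with $\rk(g)=n$, I output the code $C$ with generator matrix whose rows are $g$ and $y$, together with $w=n-2$. This is the standard passage from bounded-distance decoding to minimum distance by adjoining the received word.

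For correctness, first suppose $\rk(y-\lambda g)\le n-2$ for some $\lambda$. Then $y-\lambda g$ is a codeword of $C$, and it is non-zero because its rank would otherwise be $0$, which requires $y=\lambda g$. This degenerate case needs a separate check. If $y=\lambda g$, then $\tilde y$ is multiplication by $\lambda$, every plane of $G$ is a solution plane, and the $\GabDec$ instance is a yes-instance since $n\ge2$. On the $\mathsf{RMD}$ side, $C$ then has dimension one, so I would simply map such instances to a fixed yes-instance of $\mathsf{RMD}$. Equivalently, one can observe that the instances built in Section~\ref{sec:gab} never have $y\in\F_{q^m}g$, because $\tilde y$ acts differently on $U$ and on $\tau W$.

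For the converse, take a non-zero codeword $c=\mu y+\nu g$ with $\rk(c)\le n-2$. If $\mu=0$, then $c=\nu g$ with $\nu\ne0$, and its rank is $n>n-2$, which is impossible. Hence $\mu\ne0$, and $\mu^{-1}c=y-(-\nu/\mu)g$ has the same rank, since multiplying by a non-zero scalar of $\F_{q^m}$ is an $\F_q$-linear bijection of $\F_{q^m}$ and so preserves the $\F_q$-dimension of the span of the coordinates. So $\lambda=-\nu/\mu$ witnesses that $(g,y)$ is a yes-instance. Both directions are therefore exact, and the map is deterministic and polynomial.

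The only delicate point is that the output code must really have dimension $2$. This is exactly the degenerate case $y\in\F_{q^m}g$ handled above, and I expect it to be the main care needed: the rest of the argument is a two-line linear computation resting on the fact that non-zero multiples of $g$ have full rank $n$.
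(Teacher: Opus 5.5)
Your proposal is correct and follows the paper's proof essentially step for step: adjoin $y$ to $g$ and take $w=n-2$, send the degenerate case $y\in\F_{q^m}g$ to a fixed yes-instance, and use that a non-zero codeword with $\mu=0$ has rank $n$ while one with $\mu\neq0$ is a non-zero scalar multiple of some $y-\lambda' g$. Your side remark that the instances of Section~\ref{sec:gab} never satisfy $y\in\F_{q^m}g$ is true, but is better justified by noting that $1=\alpha^{e_1}$ and $\alpha=\alpha^{e_2}$ lie in $U$, so a scalar $\tilde y$ would force $\alpha^{q^2-1}=1$, i.e.\ $\alpha\in\F_{q^2}\cap\F_{q^N}=\F_q$, which is impossible.
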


\begin{proof}
From $(g,y)$ consider $C'=\{\mu y+\lambda g:\mu,\lambda\in\F_{q^m}\}$ and $w=n-2$. If $y\in\F_{q^m}g$, which is tested by linear algebra, the instance of $\GabDec$ is a yes-instance and we output any fixed yes-instance; otherwise $C'$ has dimension $2$. Multiplication by a non-zero scalar is an $\F_q$-linear bijection of $\F_{q^m}$, hence preserves rank. A codeword with $\mu=0$ and $\lambda\ne0$ therefore has rank $\rk(g)=n$, while a codeword with $\mu\ne0$ equals $\mu(y-\lambda'g)$ with $\lambda'=-\lambda/\mu$ and has rank $\rk(y-\lambda'g)$. Hence $C'$ has a non-zero codeword of rank at most $n-2$ if and only if the instance of $\GabDec$ is a yes-instance.
\end{proof}

These problems underlie rank-based cryptography. Gaborit and Z\'emor \cite{GZ16} established their hardness through a randomized reduction from Hamming decoding; Corollaries~\ref{cor:sd} and~\ref{cor:md} give NP-completeness under deterministic reductions. Their reduction covers constant rate and small weight, ours is restricted to dimension one or two and to large radius; the two results are complementary. The result can also be seen as the rank-metric counterpart of the Guruswami--Vardy theorem \cite{GV05}.

\subsection{Comparison with the Hamming metric}\label{sec:hamming}

It is instructive to compare with the Hamming metric. There, any linear code of fixed dimension $k$ is decoded at radius $n-k-1$ in polynomial time: a codeword at that distance agrees with the received word on $k+1$ positions, and one may enumerate the $\binom{n}{k+1}$ candidate position sets and solve a linear system for each. For $k=1$ the Reed--Solomon code is the repetition code and the problem is trivial; this is why the Guruswami--Vardy theorem needs a growing dimension. In the rank metric the agreement set is an $\F_q$-subspace rather than a subset, there are $q^{\Theta(kn)}$ candidates, and Theorem~\ref{thm:main} shows that the problem is hard already for $k=1$. The same contrast explains why the Guruswami--Vardy reduction does not transfer: it encodes a subset-sum instance into the freedom of choosing which positions to keep, a freedom which does not exist for subspaces, so that the hard constraints must be written, as here, into the field structure of $\tilde y$.

Several questions remain open and are discussed in Section~\ref{sec:disc}.

\section{Discussion and open questions}\label{sec:disc}

The reduction proves hardness in a specific regime: dimension one, radius $n-2$, and an extension degree $m\approx3n$. Each of these three parameters raises a natural question, namely how far the radius can be lowered, whether the extension degree can be brought down to the values used in cryptography, and whether the dimension can be increased. We discuss them in turn, together with directions that seem interesting to explore.

\paragraph{Below radius $n-2$.} For $k=1$, radius $n-r-1$ asks for an eigenspace of dimension $r+1$ instead of a plane. One might try to use $r+1$ summands in the support, $G=U\oplus\omega_1W\oplus\dots\oplus\omega_rW$, with $\tilde y(a+\sum_i\omega_ib_i)=a^{q^2}+\sum_i\omega_ib_i^q$, where the $\omega_i$ would be powers of $\tau$ whose pairwise products are linearly independent over $E$; a Sidon set of exponents could provide this with an extension of moderate degree. If the coefficientwise reading of the eigenplane equation were to go through as in Section~\ref{sec:gab}, the same pivot problem might give hardness at radius $n-r-1$, at the price of a larger extension degree. Whether this works, down to which radius, and with which trade-off between the radius and $m$, remains to be investigated.

\paragraph{Square codes.} The parameters produced by the reduction are not those most often used in practice: our instances have $m\approx3n$, whereas rank-based cryptosystems frequently work with square or nearly square codes, $m=n$ or $m$ close to $n$, which also give the smallest extension field for a given length. The complexity of decoding in this regime is open. A heuristic count suggests that radius $n-2$ may not be the right place to look for it: a random word has on average about $q^{2(n-2)-m}$ scalars $\lambda$ within that distance, which is large when $m<2(n-2)$. If yes-instances are indeed typical in this regime, which the expectation alone does not prove and which would not exclude worst-case hardness, the natural question for square codes would be at smaller radius.

\paragraph{Higher dimensions.} Decoding $\Gab(g,k)$ at radius $n-k-1$ asks for a $(k+1)$-dimensional space on which $\tilde y$ agrees with a $q$-polynomial of $q$-degree less than $k$. A direction that seems promising is to compose a dimension-one instance with linearized polynomials of $q$-degree one, such as $P(x)=x^q-\delta^{q-1}x$: when $G\subseteq P(\F_{q^m})$, which requires a suitable choice of $\delta$ since $\dim P^{-1}(G)=\dim(G\cap P(\F_{q^m}))+1$, replacing $(G,\tilde y)$ by $(P^{-1}(G),\tilde y\circ P)$ would raise both the dimension of the agreement space and the $q$-degree by one, whereas merely adding coordinates cannot help. The difficulty would be to control the solutions that such a lifting may create without coming from the original instance; if they could be characterized and avoided by a suitable choice of $\delta$, the present core could serve as the base case of an induction on $k$.

\section{Conclusion}

Decoding a Gabidulin code of dimension one at radius $n-2$, the first radius at which the question is not trivial, is NP-complete. Maximum-likelihood decoding of Gabidulin codes is therefore NP-hard already in dimension one, where its Hamming-metric counterpart is easy, and rank syndrome decoding and rank minimum distance for $\F_{q^m}$-linear codes are NP-complete under deterministic reductions, where only a randomized reduction was known. The proof rests on a single idea: once the received word is read as an $\F_q$-linear map, decoding becomes the search for an eigenplane, and two Frobenius twists on the support $U\oplus\tau W$ turn that eigenplane equation into the pivot problem $\Piv$, which encodes 3-SAT.

Our result is a first step towards understanding the complexity of decoding Gabidulin codes beyond the unique decoding radius. It raises natural questions on how far hardness extends, in the dimension of the code, in the decoding radius, and towards the square parameters $m\approx n$ of cryptography, which will require further investigation.

\subsubsection*{Acknowledgment.} AI tools (Claude, Anthropic, and ChatGPT, OpenAI) were used during the preparation of this manuscript for language editing, \LaTeX{} assistance, and exploratory discussions. All mathematical statements, proofs, and conclusions were independently verified by the author, who takes full responsibility for the content.

\end{document}